\documentclass{article}
\usepackage{iclr2027_conference}
\usepackage{times}
\usepackage[T1]{fontenc}
\usepackage[utf8]{inputenc}
\usepackage{amsmath,amssymb,amsthm,mathtools}
\usepackage{booktabs,multirow}
\usepackage{hyperref}
\usepackage{url}
\usepackage[nameinlink,capitalize,noabbrev]{cleveref}

\newcommand{\R}{\mathbb{R}}
\newcommand{\E}{\mathbb{E}}
\newcommand{\Prb}{\mathbb{P}}
\newcommand{\OPT}{\operatorname{OPT}}
\newcommand{\norm}[1]{\left\lVert #1\right\rVert}
\newcommand{\ones}{\mathbf{1}}
\newcommand{\col}{\operatorname{col}}

\newcommand{\argmin}{\operatorname*{arg\,min}}
\newtheorem{theorem}{Theorem}[section]
\newtheorem{lemma}[theorem]{Lemma}

\theoremstyle{definition}

\theoremstyle{remark}

\hypersetup{colorlinks=true,linkcolor=blue,citecolor=blue,urlcolor=blue,
  pdftitle={Tight Efficiency Guarantees for Strategyproof Linear Regression},
  pdfauthor={Anonymous authors},pdfsubject={ICLR 2027 submission}}

\iclrfinalcopy
\title{Tight Efficiency Guarantees for\\Strategyproof Linear Regression}
\author{Yichen Huang\thanks{Equal contribution; listed in alphabetical order.},\, Yuqi Pan\footnotemark[1],\, Michael Mitzenmacher, Milind Tambe, Yiling Chen \\
Harvard University \\
\texttt{\{yichenhuang, yuqipan\}@g.harvard.edu}
}

\begin{document}
\maketitle
\begin{abstract}
We study the trade-off between squared-error accuracy and incentive compatibility in linear regression. Agents report private labels associated with publicly known features and prefer predictions close to their true labels. Ordinary least squares (OLS) need not elicit truthful reports.

For regression with $d$ parameters, we design a deterministic group-strategyproof mechanism achieving a $(d+1)$-approximation to the least-squares optimum and prove optimality even among universally strategyproof randomized mechanisms, answering an open question of \citet{chen2018}. Relaxing universal strategyproofness to strategyproofness in expectation reveals a sharp separation: squared individual loss retains the factor $d+1$, while absolute individual loss admits the tight ratio $2-1/(\lceil d/2\rceil+1)$. 
\end{abstract}

\section{Introduction}
\label{sec:introduction}

Machine learning usually starts with a dataset and asks how to build an accurate predictor from it. But when a deployed model informs decisions that affect the people supplying its data, the learning rule can also influence what they report~\citep{perdomo2020,kleinebuening2025strategyproof}. In such settings, where people may have misaligned preferences over the predictors, we must ask not only how accurately it predicts, but also whether people benefit from manipulating its inputs. 

One prominent example is linear regression, a widely used model in many deployed prediction systems such as electricity-demand forecasting~\citep{hong2014load,pjm2025load}, residential property valuation~\citep{mpac2016residential}, and medical-cost prediction~\citep{cms2024risk}. When predictions guide resource allocation, the people supplying information may have a reason to distort it. For example, Zara's demand forecasts incorporated store managers' shipment requests, which managers frequently inflated when popular items were scarce \citep{caro2010zara}. 

Rather than taking honest inputs for granted, strategyproof mechanism design asks how a learning rule can make truthful reporting a best response. In \emph{strategyproof linear regression}, features are public, each agent reports one private response, and the output is a linear predictor. Each agent wants the prediction at their features to be close to their true response. A strategyproof rule makes honesty optimal whatever others report, without payments, label verification, a prior over agents' information, or equilibrium coordination. In a laboratory study of regression, \citet{perote2015strategic} find more misreporting under OLS than under a strategyproof resistant line estimator.

Truthfulness alone does not ensure a useful predictor: a rule that ignores every report is strategyproof. We also want low squared error, the standard regression objective \citep{hastie2009,gneiting2011} and a social objective in mechanism design \citep{feldman2013}. Even fitting a constant without public features illustrates the tension. Least squares returns the mean, which an agent can move toward their true value by misreporting. A suitably tie-broken median prevents manipulation but can increase squared error. The approximation ratio against the least-squares fit to the true labels therefore measures the \emph{price of strategyproofness}.

For linear regression over general feature spaces, this price has remained poorly understood. Prior work establishes truthful rules, including absolute-loss minimization and generalized resistant hyperplane mechanisms \citep{dekel2010,chen2018}, but leaves a substantial gap in their squared-error guarantees: the known deterministic upper bound is $n$, the number of agents, while the general lower bound is only $2$ \citep{chen2018}. It is natural to ask for the best accuracy guarantee achievable under incentive compatibility.

Randomization offers a potential way to further reduce this price. With a carefully designed randomized mechanism, a beneficial deviation under some realizations may hurt the agent under other realizations, leaving truthful reporting optimal in expectation. This flexibility has yielded better guarantees in settings such as facility location~\citep{feldman2013}, but its role in strategyproof linear regression has not been studied. Balancing accuracy and incentives is more delicate here: truthfulness in expectation depends on the precise form of agents' losses, not just their preference for smaller individual errors. We therefore ask whether randomization can lower the price of strategyproofness, and how the answer depends on agents' preferences.

\paragraph{Our results.}
We resolve these questions and show that the relevant price is governed by the dimension. Throughout the paper, \emph{$d$ is the number of free coefficients}, including an intercept if one is fitted. Our social objective is always squared error; an agent's individual loss is the $p$-th power of their absolute error, for some $p>0$ common to every agent. \Cref{tab:results} summarizes the results.

First, we give a deterministic $(d+1)$-approximation (\cref{thm:deterministic}) by minimizing a weighted sum of absolute residuals. The weights depend only on public features and capture how strongly the model can move each prediction. This preserves truthfulness while replacing sample-size dependence by dimension dependence. For closed convex hypothesis sets, the guarantee becomes $k+1$, where $k \le d$ is the affine dimension of the hypothesis set. This approximation ratio is tight, even among universally strategyproof randomized mechanisms.\footnote{A randomized mechanism is universally strategyproof (USP) if it is a distribution over deterministic strategyproof mechanisms, and strategyproof in expectation (SPIE) if no agent can reduce their \emph{expected} loss by misreporting. See \cref{sec:model}.}

Next, we consider strategyproof-in-expectation (SPIE) mechanisms for the complete hypothesis set. Unlike deterministic truthfulness, truthfulness in expectation depends on agents' loss. The most natural choices are arguably absolute error ($p=1$) and squared error ($p=2$).

For $p=1$, randomization gives a substantially better guarantee (\cref{thm:absolute}). We sample a geometrically diverse set of agents, fit their reports exactly, and minimize squared error over all observations using the remaining freedom. Selecting about half as many agents as there are coefficients balances two effects: a selected agent cannot improve on an exact fit, while an unselected agent has limited influence. The resulting ratio is $2-1/(\lceil d/2\rceil+1)$, and a matching lower bound shows its optimality.

In contrast, for $p=2$, individual and social losses are both squared error, and the factor $d+1$ cannot be improved (\cref{thm:lower}). Randomization therefore offers no improvement in the optimal worst-case guarantee.

Finally, the benefit of randomization is not confined to absolute loss. For every positive exponent $p\ne2$, we give a finite lottery over combinations of independently sampled interpolating models that strictly beats $d+1$ (\cref{thm:powers}). Averaging improves accuracy, while additional outcomes protect truthful reporting.

Our results give a relatively complete picture of the accuracy--incentive trade-off in linear regression. The underlying concern extends across machine learning whenever the people supplying data or feedback have a stake in the model they help train~\citep{ouyang2022,chen2023collaborative,wang2024rlhfpoison,kleinebuening2025strategyproof}. We hope this work inspires broader research on trade-offs between statistical accuracy and incentive compatibility in learning systems.

\begin{table}[t]
\centering
\caption{Worst-case guarantees with $d$ coefficients. $p$ refers to the power in the individual loss. Deterministic and USP guarantees do not depend on $p$ in the individual loss function. 
The SPIE upper bounds for $p\ne2$ assume the complete hypothesis set. 
All but the last row are tight.\vspace{1em}}
\label{tab:results}
\begin{tabular}{@{}lll@{}}
\toprule
Incentive requirement & Individual loss & Approximation ratio \\
\midrule
Deterministic / universally SP & -- & $d+1$  \\
\multirow{3}{*}[\dimexpr\normalbaselineskip\relax]{SP in expectation}& $p=1$ & $2-1/(\lceil d/2\rceil+1)$  \\
 & $p=2$ & $d+1$  \\
 & $p>0,\ p\neq 2$ & $<d+1$ \\
\bottomrule
\end{tabular}
\end{table}


\subsection{Related work}
\label{sec:related}

\paragraph{Strategic regression and learning.}
\citet{perote2004} study strategyproof estimators for simple regression. \citet{dekel2010} develop a learning framework for strategic data and establish truthfulness of absolute-loss minimization in the one-point-per-agent setting. \citet{chen2018} extend this approach to weighted objectives and convex regularization, develop high-dimensional mechanisms and characterizations, and pose the squared-error efficiency question. In a complementary direction, \citet{hossain2021} analyze equilibria of non-strategyproof regression rules. Other work considers strategic behavior in classification models~\citep{meir2012,hardt2016} or learning with external predictions \citep{balkanski2025}. Payment-based models address different incentives, including privacy costs \citep{cummings2015,zhu2024truthful} and costly data acquisition \citep{cai2015}.

\paragraph{Facility location and estimation.}
Fitting a constant without public features is equivalent to single-facility location on the line.\footnote{The true value of an individual corresponds to their location and the fitted value corresponds to the selected location of the facility.} Generalized medians connect its incentive properties to classical social choice \citep{moulin1980}, and approximate mechanism design without money studies the accompanying optimization tradeoffs \citep{procaccia2013}. For the sum of squared distances and expected-distance preferences, \citet{feldman2013} give an optimal randomized ratio of $3/2$ on the line, compared with the deterministic optimum of $2$.

\paragraph{Accuracy under additional requirements.}
Beyond incentive compatibility, researchers study trade-offs between predictive accuracy and other desirable properties, such as fairness \citep{menon2018fairness}, differential privacy \citep{chaudhuri2011differentially}, and adversarial robustness \citep{tsipras2019robustness}. Related performance trade-offs arise for interpretable clustering \citep{dasgupta2020explainable} and explainable information design \citep{chen2026price}.


\section{Model and preliminaries}
\label{sec:model}

\paragraph{Data and dimension.}
There are $n$ agents and $d\ge1$ coefficients. Agent $i$ has a public row vector $x_i^\top$ and a private label $y_i\in\R$, and controls only the report of that label. Let $X\in\R^{n\times d}$ have rows $x_i^\top$, and let $y\in\R^n$ be the label vector. We assume $X$ has full column rank, so $n\ge d$. A coefficient vector $\beta\in\Theta\subseteq\R^d$ produces the predictions $X\beta$, where $\Theta$ is a public nonempty closed convex set fixed independently of reports. Unless a constraint is specified, $\Theta=\R^d$.

Our dimension convention counts \emph{free scalar coefficients}, not the dimension of an augmented data point. For example, $f(u)=a^\top u+b$ on $u\in\R^{d-1}$ is a $d$-parameter model, represented by $x=(u,1)$ and $\beta=(a,b)$. 
We use $e_i$ for the $i$-th standard basis vector and $\ones$ for the all-ones vector.

\paragraph{Social objective.}
For $z=X\beta$, write $C(z;y)=\norm{z-y}_2^2$ and $\OPT_\Theta(y)=\min_{\beta\in\Theta}C(X\beta;y)$. A mechanism maps reports to feasible coefficients, possibly at random; $Z(y)$ denotes its random prediction vector. It is \emph{$c$-efficient}, or a \emph{$c$-approximation}, if
\begin{equation*}
\E C(Z(y);y)\le c\,\OPT_\Theta(y)\qquad\text{for every }y\in\R^n.
\label{eq:approximation}
\end{equation*}
The inequality also applies when the optimum is zero: any finite approximation must then return the exact prediction vector almost surely. 

\paragraph{Individual incentives.} Each agent's preference over the mechanism's prediction for them is single-peaked at their true value. Here, each agent evaluates a prediction using $\ell_p(z_i,y_i)=|z_i-y_i|^p$, for a fixed finite exponent $p>0$ shared by all agents. The cases $p=1$ and $p=2$ are absolute and squared individual loss, respectively. A randomized mechanism is \emph{strategyproof in expectation} (SPIE) for loss $\ell_p$ if, for every agent $i$, every label vector $y$, and every alternative report $t\in\R$,
\begin{equation*}
\E|Z_i(y)-y_i|^p\le\E|Z_i(t,y_{-i})-y_i|^p,
\label{eq:spie}
\end{equation*}
where $(t,y_{-i})$ replaces only coordinate $i$ of $y$ by $t$.

A deterministic mechanism is \emph{strategyproof} if the same inequality holds without expectations. 
It is \emph{group-strategyproof} if no coalition can change its reports so that every member weakly reduces their loss and at least one member strictly reduces it.
A randomized mechanism is \emph{universally strategyproof} if it is a report-independent distribution over deterministic strategyproof mechanisms. 

All positive powers of absolute error induce the same preferences over deterministic outcomes. Consequently, deterministic and universal strategyproofness do not depend on $p$.

\section{An optimal deterministic mechanism}
\label{sec:deterministic}

The deterministic mechanism keeps the incentive protection of absolute-loss minimization, but changes the relative weight of observations. A naive $\ell_1$ risk minimizer with equal weights compares an $\ell_1$ objective with an $\ell_2$ objective in the entire $n$-dimensional observation space. Our weights instead recognize that feasible predictions move in at most $d$ independent directions.

\paragraph{Geometry and weights.} We first introduce the notation. Let $K=X\Theta$ be the feasible prediction set. To describe how predictions can change, define its direction space $L=\operatorname{span}(K-K)$, where $K-K=\{z-z':z,z'\in K\}$. Thus $L$ contains all linear combinations of feasible prediction differences and ignores any fixed offset. Set $k=\dim L$; because $X$ has full column rank, $k$ is also the affine dimension of $\Theta$.

Let $P$ be the orthogonal projector onto $L$: for every $v\in\R^n$, $Pv$ is the unique vector in $L$ closest to $v$ in Euclidean distance. To compute it explicitly when $k>0$, choose $A\in\R^{d\times k}$ whose columns form a basis of $\operatorname{span}(\Theta-\Theta)$. Then $L=\col(XA)$ and
\[
P=XA(A^\top X^\top XA)^{-1}A^\top X^\top.
\]
The diagonal entries $\tau_i=P_{ii}$ are the \emph{leverage scores} of $L$; they will determine how much weight each observation receives.

\paragraph{Mechanism: leverage-weighted $\ell_1$-ERM.}
Choose
\begin{equation}
\widehat\beta(y)\in\argmin_{\beta\in\Theta}
\sum_{i=1}^n\sqrt{\tau_i}\,|y_i-x_i^\top\beta|,
\quad\text{breaking ties by minimizing }\norm{\beta}_2^2.
\label{eq:weighted-lad}
\end{equation}

This is a weighted $\ell_1$ risk minimizer, with weights determined by the leverage scores. Intuitively, an observation should receive more weight when feasible models can change its prediction more strongly. Precisely, for $u\in L$, Cauchy--Schwarz gives $|u_i|\le\sqrt{\tau_i}\norm{u}_2$. This motivates the square root in \eqref{eq:weighted-lad}. The leverage scores also satisfy $\sum_i\tau_i=k$, so their total contribution is controlled by dimension. As an example, for constant fitting, every leverage score is $1/n$; the mechanism reduces to a median, with the specified fixed tie-break, and the bound becomes $2$.


\begin{theorem}[Leverage-weighted absolute deviations]
\label{thm:deterministic}
The mechanism in \eqref{eq:weighted-lad} is well defined and group-strategyproof. Its squared-error approximation ratio is at most $k+1$, where $k \le d$ is the affine dimension of $\Theta$.
\end{theorem}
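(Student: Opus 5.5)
The proof has three parts: well-definedness, group-strategyproofness, and the $(k+1)$ bound. Write $w_i=\sqrt{\tau_i}$ and $F_y(\beta)=\sum_i w_i|y_i-x_i^\top\beta|$.

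\emph{Well-definedness.} $F_y$ is convex and piecewise linear, and $\Theta$ is closed and convex. I would show that $F_y$ is coercive along every direction of $\operatorname{span}(\Theta-\Theta)$. If $XAv\ne 0$, the Cauchy--Schwarz identity $|u_i|\le w_i\norm{u}_2$ for $u\in L$ forces $w_i|(XAv)_i|>0$ for some $i$, and full column rank of $X$ gives $XAv\ne0$ whenever $v\ne0$. Hence the minimum over $\Theta$ is attained. The argmin set is then nonempty, closed and convex, and the strictly convex tie-break $\norm{\beta}_2^2$ selects a unique point.

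\emph{Group-strategyproofness.} I would follow the standard \citet{dekel2010}/\citet{chen2018} argument for weighted absolute loss. Suppose a coalition $S$ moves from $y$ to $y'$, and the outcome moves from $z=X\widehat\beta(y)$ to $z'=X\widehat\beta(y')$. Suppose also $|z'_i-y_i|\le|z_i-y_i|$ for all $i\in S$, with strict inequality for some member. Decompose
\[
F_y(z')-F_y(z)=\bigl[F_{y'}(z')-F_{y'}(z)\bigr]+\sum_{i\in S}w_i\Bigl[\bigl(|y_i-z'_i|-|y_i-z_i|\bigr)-\bigl(|y'_i-z'_i|-|y'_i-z_i|\bigr)\Bigr].
\]
The first bracket is $\le 0$ by optimality of $z'$ for $y'$. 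For the sum, I would use single-peakedness per coordinate: a report moving the fitted value toward $y_i$ must lie on the far side in the appropriate sense. This should show that $z'$ is also optimal for $y$. The tie-break then makes $z'=z$, contradicting the assumed strict improvement.

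Zero-weight agents need separate handling, since they do not affect the objective and can only act through the tie-break. The tie-break term is report-independent, so they cannot change the outcome.

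\emph{Approximation.} Let $z^*$ be the Euclidean projection of $y$ onto $K$, so $\OPT_\Theta(y)=\norm{r}^2$ with $r=z^*-y$. Let $u=z-z^*\in L$. The convex-projection inequality gives $\langle r,u\rangle\ge0$, which is the unfavourable sign, so the argument must control the cross term. I would expand
\[
\norm{z-y}^2=\norm{r}^2+2\langle r,u\rangle+\norm{u}^2
\]
and then use optimality of $z$ for $F_y$, namely $\sum_i w_i(|r_i+u_i|-|r_i|)\le0$.

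Let $N=\{i: r_iu_i<0\}$. Same-sign coordinates contribute at least $w_i|u_i|$ to this sum. Opposite-sign coordinates contribute at least $w_i(|u_i|-2\min(|u_i|,|r_i|))$. Together with $\norm{u}\le\sum_i w_i|u_i|$ and $\sum_i\tau_i=k$, this already yields a bound of order $k$, but with a constant around $4$.

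The main obstacle is obtaining exactly $k+1$. My first attempt would be to parametrize by $t=\norm{u}/\norm{r}$ and the split of $u$ across $N$ and its complement. I would bound $\langle r,u\rangle$ using only the same-sign coordinates, since the $N$-coordinates contribute negatively. I would apply Cauchy--Schwarz with weights $\tau_i$ separately on $N$ and $N^c$, so that $\sum_{N}\tau_i+\sum_{N^c}\tau_i=k$ enters as a single budget. The goal is to reduce the question to a two-variable quadratic inequality whose worst case gives $\norm{u}^2+2\langle r,u\rangle\le k\norm{r}^2$. If this fails, I would fall back to the one-dimensional extremal analysis used for the median ratio $2$ and try to lift it via the leverage bound $|u_i|\le w_i\norm{u}$.
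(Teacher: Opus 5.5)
\textbf{The main gap: the $(k+1)$ bound is not proved.} Your stated target $\norm{u}_2^2+2\langle r,u\rangle\le k\norm{r}_2^2$ is exactly $\norm{r+u}_2^2\le(k+1)\norm{r}_2^2$, so it restates the theorem rather than reducing it. The steps you do carry out give only $(1+2\sqrt k)^2$: you apply $\norm{u}_2\le\sum_i w_i|u_i|$ only to $u\in L$, then use $\sum_i w_i|u_i|\le 2\sum_i w_i|r_i|$ from optimality. This loses a factor $2$ and leaves the cross term.

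Expanding around $z^\star$ is what creates the difficulty. The missing idea is to split the \emph{residual} orthogonally with respect to $L$. Let $P$ be the projector onto $L$ and $W(a)=\sum_i\sqrt{\tau_i}|a_i|$. Since $(I-P)u=0$, you get $\norm{r+u}_2^2=\norm{P(r+u)}_2^2+\norm{(I-P)r}_2^2$. The second term is at most $\norm{r}_2^2$, with no cross term at all. For the first term, note that the leverage inequality holds for every vector, not just for vectors in $L$: $\norm{Pa}_2\le\sum_i|a_i|\norm{Pe_i}_2=W(a)$, because $\norm{Pe_i}_2^2=P_{ii}=\tau_i$. Apply it to the full residual and use optimality directly: $\norm{P(r+u)}_2\le W(r+u)\le W(r)\le\sqrt{k}\norm{r}_2$. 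The last step is Cauchy--Schwarz with $\sum_i\tau_i=\operatorname{tr}P=k$. Summing the two pieces gives $(k+1)\OPT_\Theta(y)$. This is the paper's argument, and it needs neither the projection inequality nor any case analysis on signs.

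\textbf{A secondary problem: the group-strategyproofness sketch.} It does not go through as written. The per-agent bracket $(|y_i-z'_i|-|y_i-z_i|)-(|y'_i-z'_i|-|y'_i-z_i|)$ can be strictly positive for an improving coalition member who exaggerates. For example, $z_i=0$, $z'_i=2$, $y_i=1.5$, $y'_i=10$ gives $+1$. So coordinatewise single-peakedness does not yield $F_y(z')\le F_y(z)$. Handling such exaggerations together with the tie-break is the nontrivial content of \citet[Theorem~1]{chen2018}, which the paper simply invokes. You should cite it as well unless you reproduce that argument. Your well-definedness and zero-weight arguments are fine.
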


\begin{proof}
Absolute-loss minimization with public nonnegative, report-independent weights and a fixed strictly convex tie-break is group-strategyproof over a convex prediction class \citep[Theorem 1]{chen2018}. It remains to prove the approximation guarantee.

Let $z^\star$ be the least-squares minimizer in $K$, let $\widehat z=X\widehat\beta$, and set $r^\star=y-z^\star$ and $\widehat r=y-\widehat z$. Since $P\widehat r$ and $(I-P)\widehat r$ are orthogonal, we split $\|\widehat r\|_2^2 = \norm{P\widehat r}_2^2 +\norm{(I-P)\widehat r}_2^2$ and bound them separately. 

Both predictions lie in $K$, so their difference lies in $L$ and $(I-P)\widehat r=(I-P)r^\star$. In words, every feasible model has the same residual component perpendicular to the direction space; only the parallel component can change. Thus, 
  \[
  \norm{(I-P)\widehat r}_2^2 = \norm{(I-P)r^\star}_2^2 \le \norm{r^\star}_2^2.
  \]

It remains to bound $\norm{P\widehat r}_2^2$. Define the weighted absolute residual $W(a)=\sum_{i=1}^n\sqrt{\tau_i}|a_i|$.
We first show that $W(a)$ bounds the length of the projection of $a$
onto $L$. Since $P$ is an orthogonal projector, it satisfies
$P^\top=P$ and $P^2=P$. Consequently,
\[
\norm{Pe_i}_2^2
=e_i^\top P^\top Pe_i
=e_i^\top Pe_i
=P_{ii}
=\tau_i.
\]
Writing $a=\sum_i a_i e_i$ and applying the triangle inequality gives
\[
\norm{Pa}_2
=\left\lVert\sum_i a_i Pe_i\right\rVert_2
\le\sum_i |a_i|\norm{Pe_i}_2
=\sum_i\sqrt{\tau_i}|a_i|
=W(a).
\]
Thus the weighted absolute residual controls the part of the
residual that lies in the prediction direction space.

By definition, our mechanism minimizes $W(y-z)$ over feasible
predictions $z\in K$. Since the least-squares prediction $z^\star$
is also feasible, optimality gives
$W(\widehat r)\le W(r^\star)$.
Moreover, the leverage scores sum to the dimension of $L$:
$\sum_i\tau_i=\operatorname{tr}(P)=k$.
Cauchy--Schwarz therefore implies
\[
W(r^\star)
=\sum_i\sqrt{\tau_i}|r_i^\star|
\le
\left(\sum_i\tau_i\right)^{1/2}
\left(\sum_i(r_i^\star)^2\right)^{1/2}
=\sqrt{k}\,\norm{r^\star}_2.
\]
Combining these inequalities yields $\norm{P\widehat r}_2 \le W(\widehat r) \le W(r^\star) \le \sqrt{k}\,\norm{r^\star}_2$. We conclude
\begin{equation*}
\norm{\widehat r}_2^2 =\norm{P\widehat r}_2^2 + \norm{(I-P)\widehat r}_2^2 \le k\norm{r^\star}_2^2+\norm{r^\star}_2^2 =(k+1)\norm{r^\star}_2^2 =(k+1)\OPT_\Theta(y). \qedhere
\end{equation*}

\end{proof}

The lower bound in \Cref{sec:lower} shows that the unrestricted factor $d+1$ is optimal even among universally strategyproof randomized mechanisms. 

\section{Strategyproofness in expectation}
\label{sec:spie}
We now introduce randomness and let each agent minimize expected $p$-th power error, $\E|Z_i-y_i|^p$, for some $p>0$. The social objective remains squared error. Throughout this section, the hypothesis class is \emph{complete}: $\Theta=\R^d$, so every vector in $L=\col(X)$ is an admissible prediction. We first show a sharp bound for the perhaps most natural choice, $p=1$, then demonstrate that randomization cannot help for $p=2$, when the individual loss and the social loss coincide, and finally show that every other positive exponent permits a strict improvement over $d+1$. Together, these results give an exact characterization of when randomization helps.

\subsection{Absolute individual loss: a sharp upper bound at \texorpdfstring{$p=1$}{p=1}}
\label{sec:randomized}

A natural randomized approach is a form of \emph{random dictatorship}: randomly select a group of agents and interpolate their reported data. Full interpolation of $d$ independent observations protects the sampled agents but leaves no freedom to fit the remaining data. Interpolating fewer reports preserves useful freedom, but requires extra care to ensure truthfulness.

\paragraph{Mechanism: Partial interpolation.} 
Let $L=\col(X)$ and $H=X(X^\top X)^{-1}X^\top$, the orthogonal projector onto $L$. The ordinary least-squares (OLS) predictions are $Hy$, and the residual is $r^\star=(I-H)y$. Thus $r^\star\perp L$ and $\OPT(y)=\norm{r^\star}_2^2$. For $B\subseteq[n]$, $H_B$ denotes its principal submatrix, $H_{:,B}$ its columns, and $y_B$ the restricted vector.

For $b\in\{0,\ldots,d\}$, draw a set $B\subseteq[n]$ of size $b$ with the probability below and return
\begin{equation}
\Prb(B)=\frac{\det(H_B)}{\binom db},
\qquad
Z^B(y)=\argmin_{z\in L:\ z_B=y_B}\norm{z-y}_2^2.
\label{eq:partial-interpolation}
\end{equation}
The probabilities depend only on public features. Geometrically, the distribution favors observations spanning different directions in the prediction space. We may randomize $b$ independently of the reports. When $b=0$, the mechanism is the (non-strategyproof) OLS rule; when $b=d$, it is the $(d+1)$-efficient volume sampling mechanism of \citet{derezinski2017}.

We give the probabilities explicitly for ease of interpretation, but sampling does not require enumerating subsets. An iterative sampling scheme, adapted from the projection determinantal point process (DPP) sampler presented by \citet{kulesza2012}, takes $O(nd^2)$ arithmetic operations; see \Cref{app:iterative-sampling}. 

\paragraph{Efficiency and strategyproofness.} We first note that \eqref{eq:partial-interpolation} indeed defines a probability distribution. Since $H$ is a rank-$d$ orthogonal projector, its principal-submatrix determinants satisfy
\begin{equation}
    \sum_{|B|=b}\det(H_B) =\binom db,
    \label{eq:minor-normalization}
\end{equation}
as shown by the determinant expansion in \Cref{app:partial-spie}. These determinants are nonnegative, so dividing by $\binom db$ gives probabilities that sum to one.

The following two lemmas establish the approximation guarantee and truthfulness. 

\begin{lemma}[Partial-interpolation efficiency]
\label[lemma]{lem:partial-efficiency}
For the complete hypothesis class $\Theta=\R^d$, the fixed-size rule \eqref{eq:partial-interpolation} with parameter $b$ has squared-error approximation ratio at most $(d+1)/(d-b+1)$.
\end{lemma}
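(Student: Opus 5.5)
The plan is to compute the expected cost exactly by reducing it to principal-minor sums, the same kind of identity used for the normalization \eqref{eq:minor-normalization}. The bound should then hold with equality.

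\textbf{Step 1: cost of a fixed $B$.} Fix $B$ with $\det(H_B)>0$; sets with $\det(H_B)=0$ have probability zero and can be ignored. Write any $z\in L$ as $z=Hy+u$ with $u\in L$. Since $r^\star=(I-H)y\perp L$, Pythagoras gives
\[
\norm{z-y}_2^2=\norm{r^\star}_2^2+\norm{u}_2^2 .
\]
The constraint $z_B=y_B$ becomes $u_B=r^\star_B$. So we need the minimum-norm $u\in L$ with prescribed coordinates on $B$. Parametrize $u=Hw$. Then $u_B=H_{B,:}w$ and $\norm{u}_2^2=w^\top Hw$. The minimizer is $u=H_{:,B}H_B^{-1}r^\star_B$, because $H^2=H$ gives $H_{B,:}H_{:,B}=H_B$. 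Its squared norm is $(r^\star_B)^\top H_B^{-1}r^\star_B$. Hence
\[
C(Z^B(y);y)=\norm{r^\star}_2^2+(r^\star_B)^\top H_B^{-1}r^\star_B .
\]

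\textbf{Step 2: averaging via the matrix determinant lemma.} Weighting by $\Prb(B)=\det(H_B)/\binom db$ cancels the inverse against the determinant:
\[
\E\bigl[(r^\star_B)^\top H_B^{-1}r^\star_B\bigr]
=\binom db^{-1}\sum_{|B|=b}(r^\star_B)^\top\operatorname{adj}(H_B)\,r^\star_B .
\]
By the matrix determinant lemma, $(r^\star_B)^\top\operatorname{adj}(H_B)r^\star_B=\det\bigl((H+r^\star r^{\star\top})_B\bigr)-\det(H_B)$. The sum is therefore $e_b(M)-e_b(H)$ with $M=H+r^\star r^{\star\top}$, where $e_b$ is the $b$-th elementary symmetric function of the eigenvalues. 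This uses the standard fact that the sum of all $b\times b$ principal minors equals $e_b$ of the spectrum, which is the same expansion behind \eqref{eq:minor-normalization}.

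\textbf{Step 3: spectra.} Because $r^\star\perp L$, $M$ has eigenvalue $1$ with multiplicity $d$ on $L$ and eigenvalue $\norm{r^\star}_2^2$ along $r^\star$; the rest are zero. The case $r^\star=0$ is trivial, since then $Z^B=Hy$ exactly. Thus
\[
e_b(M)=\binom db+\norm{r^\star}_2^2\binom d{b-1},\qquad e_b(H)=\binom db .
\]
Their difference is $\norm{r^\star}_2^2\binom d{b-1}$. Dividing by $\binom db$ gives $\frac{b}{d-b+1}\norm{r^\star}_2^2$. Adding the $\norm{r^\star}_2^2$ term from Step 1 yields
\[
\E\,C(Z(y);y)=\frac{d+1}{d-b+1}\,\OPT(y).
\]

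\textbf{Expected obstacle.} The only delicate points are the identification of the minimum-norm interpolating correction in Step 1 and the clean justification of the principal-minor/elementary-symmetric identity. Both are standard linear algebra.
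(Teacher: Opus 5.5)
Your approach is the paper's: the same cost decomposition $C(Z^B(y);y)=\norm{r^\star}_2^2+(r^\star_B)^\top H_B^{-1}r^\star_B$, the same matrix determinant lemma, and the same eigenvalue count for $M=H+r^\star r^{\star\top}$.

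\textbf{The error is in Step 2.} There you replace the expectation, which is a sum over sets with $\det(H_B)>0$, by a sum over \emph{all} $|B|=b$. Dropping probability-zero sets is legitimate inside the expectation. It does not follow that they contribute nothing to $\sum_{|B|=b}(r^\star_B)^\top\operatorname{adj}(H_B)\,r^\star_B$. A singular positive semidefinite $H_B$ of rank $b-1$ has a nonzero adjugate, and for $b=1$ one has $\operatorname{adj}([0])=[1]$. In that case the term $\det(M_B)-\det(H_B)=\det(M_B)$ can be strictly positive. So the Step 2 equality, and your final claim $\E\,C(Z(y);y)=\frac{d+1}{d-b+1}\OPT(y)$, are false in general.

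For example, take $d=b=1$ and $X=(1,0)^\top$. The rule always samples agent~1 and returns $(y_1,0)=Hy$, so its ratio is $1$, not $2$. Yet the probability-zero set $B=\{2\}$ contributes $\det(M_{\{2\}})=y_2^2=\norm{r^\star}_2^2$ to your sum. A less degenerate instance: duplicated feature rows, such as $(1,0),(1,0),(0,1)$ with $b=2$, give expected cost $2\OPT(y)$ rather than $3\OPT(y)$. Here the singular set $\{1,2\}$ is the one that accounts for the missing $\norm{r^\star}_2^2$.

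\textbf{The repair is the step the paper takes.} Since $M$ is positive semidefinite, every omitted term $\det(M_B)$ with $\det(H_B)=0$ is nonnegative. Hence the sum over sampled sets is at most $e_b(M)-e_b(H)=\binom d{b-1}\norm{r^\star}_2^2$, which yields the lemma as an upper bound. Equality holds only when $\det(M_B)=0$ whenever $\det(H_B)=0$. This is the case, for instance, on the balanced design used for the lower bounds, where every $H_B$ with $|B|\le d$ is nonsingular.

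\textbf{A smaller gap in Step 1.} The identity $H_{B,:}H_{:,B}=H_B$ gives the norm of your candidate correction, not its minimality. You still need that it is orthogonal to every $v\in L$ with $v_B=0$. This follows from $\langle H_{:,B}H_B^{-1}r^\star_B,v\rangle=(H_B^{-1}r^\star_B)^\top v_B=0$, using $Hv=v$.
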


\begin{proof}
The case $b=0$ returns OLS, so suppose $b\ge1$. Write $r^\star=(I-H)y$. We first show that least squares subject to interpolating the reports on $B$ produces the prediction $Hy+q^B$, where $q^B:=H_{:,B}H_B^{-1}r_B^\star$.

To interpolate the sampled reports, we add
$q^B$ to the OLS prediction $Hy$. This correction belongs to $L$ and satisfies $(q^B)_B=r_B^\star$. It is also orthogonal to every $v\in L$ with $v_B=0$, since
$\langle q^B,v\rangle=(r_B^\star)^\top H_B^{-1}v_B=0$.
Every other feasible correction has the form $q^B+v$, where $v\in L$ and $v_B=0$. Orthogonality gives $\norm{q^B+v}_2^2=\norm{q^B}_2^2+\norm v_2^2$, so $q^B$ has minimum norm. Because $r^\star\perp L$ and $H^2=H$, we obtain
\begin{equation}
Z^B(y)=Hy+q^B,\qquad
\norm{Z^B(y)-y}_2^2
=\norm{r^\star}_2^2+(r_B^\star)^\top H_B^{-1}r_B^\star.
\label{eq:constrained-loss-main}
\end{equation}

We now bound the average of $(r_B^\star)^\top H_B^{-1}r_B^\star$, weighted by $\det(H_B)$. A rank-one perturbation captures this product: for each sampled $B$ and $t>0$, the determinant lemma (e.g. \citet[p.~26, Eq.~(0.8.5.11)]{horn2013}) gives
\begin{equation}
\det(H_B+t r_B^\star(r_B^\star)^\top)
=\det(H_B)+t\det(H_B)(r_B^\star)^\top H_B^{-1}r_B^\star.
\label{eq:determinant-lemma}
\end{equation}
The coefficient of $t$ is the extra error. To sum these determinants, recall that $r^\star\perp L$: adding $t r^\star(r^\star)^\top$ leaves the $d$ unit eigenvalues of $H$ unchanged and adds the eigenvalue $t\norm{r^\star}_2^2$ in the residual direction when $r^\star\ne0$. Each nonzero product of $b$ eigenvalues either uses only unit eigenvalues or uses this additional eigenvalue and $b-1$ unit eigenvalues. Thus,
\begin{equation}
\sum_{|B|=b}\det(H_B+t r_B^\star(r_B^\star)^\top)
=\binom db+t\binom d{b-1}\norm{r^\star}_2^2.
\label{eq:rank-one-minors-main}
\end{equation}
The formula also holds when $r^\star=0$. Summing \eqref{eq:determinant-lemma} over sampled sets gives $\binom db+t\binom db\E_B[(r_B^\star)^\top H_B^{-1}r_B^\star]$. This is at most the full sum in \eqref{eq:rank-one-minors-main}: the remaining minors are nonnegative because $H+t r^\star(r^\star)^\top$ is positive semidefinite. Subtracting $\binom db$ and dividing by $t\binom db$ therefore gives
\[
\E_B[(r_B^\star)^\top H_B^{-1}r_B^\star]
\le\frac{\binom d{b-1}}{\binom db}\norm{r^\star}_2^2
=\frac{b}{d-b+1}\norm{r^\star}_2^2.
\]
Combining this with \eqref{eq:constrained-loss-main} and
$\norm{r^\star}_2^2=\OPT(y)$ proves the claim.
\end{proof}

\begin{lemma}[Partial-interpolation strategyproofness]
\label[lemma]{lem:partial-spie}
A report-independent distribution of $b$ with $\E b\ge d/2$ makes the mechanism SPIE for absolute individual loss.
\end{lemma}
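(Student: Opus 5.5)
The plan is to fix an agent $i$ and the other reports $y_{-i}$, and to write agent $i$'s expected absolute loss as a convex piecewise-linear function of their report $t$. Truthfulness then reduces to checking that $0$ lies in its subdifferential at $t=y_i$. From \eqref{eq:constrained-loss-main}, $Z^B(y)=Hy+H_{:,B}H_B^{-1}((I-H)y)_B$. This is linear in the report vector, so each coordinate $Z^B_i(t,y_{-i})$ is affine in $t$. We only sum over sets $B$ with $\det(H_B)>0$; the others have probability zero.

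\textbf{Step 1: the slope of each outcome.} If $i\in B$, the sampled set is interpolated exactly, so $Z^B_i=t$ and the loss is $|t-y_i|$, with slope $\pm1$. If $i\notin B$, then $\partial Z^B_i/\partial t=H_{ii}-H_{iB}H_B^{-1}H_{Bi}$, because $((I-H)e_i)_B=-H_{Bi}$. This is the Schur complement $s_B=\det(H_{B\cup\{i\}})/\det(H_B)$, so $s_B\ge0$. Hence the expected loss is
\[
f(t)=\Prb(i\in B)\,|t-y_i|+\sum_{B\not\ni i}\Prb(B)\,|Z^B_i(y)+s_B(t-y_i)-y_i| .
\]
At $t=y_i$, the left and right derivatives are bounded in absolute value by the quantity below, up to the sign of the first term. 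So a sufficient condition for SPIE is
\[
\Prb(i\in B)\;\ge\;\sum_{B\not\ni i}\Prb(B)\,s_B .
\]

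\textbf{Step 2: evaluate both sides for fixed $b$.} The right-hand side equals $\binom db^{-1}\sum_{|B'|=b+1,\,B'\ni i}\det(H_{B'})$, because each $B'$ arises from the unique set $B=B'\setminus\{i\}$. So both sides reduce to the marginal sums $S_m(i)=\sum_{|B|=m,\,B\ni i}\det(H_B)$.

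I would compute $S_m(i)$ as $e_m(H)-e_m(H_{-i})$, where $e_m$ is the sum of principal $m$-minors and $H_{-i}$ is $H$ with row and column $i$ deleted. Equation \eqref{eq:minor-normalization} gives $e_m(H)=\binom dm$. For $H_{-i}$, its nonzero spectrum coincides with that of $H(I-e_ie_i^\top)H=H-hh^\top$, where $h=He_i\in L$ and $\norm h_2^2=\tau_i$. That spectrum is $d-1$ ones together with $1-\tau_i$. Therefore
\[
e_m(H_{-i})=\binom{d-1}{m}+(1-\tau_i)\binom{d-1}{m-1},
\]
which gives $S_m(i)=\tau_i\binom{d-1}{m-1}$.

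It follows that $\Prb(i\in B)=\tau_i\, b/d$, and the right-hand side equals $\tau_i\binom{d-1}{b}/\binom db=\tau_i(d-b)/d$. At $b=d$ this is $0$, consistent with full interpolation.

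\textbf{Step 3: mix over $b$.} Because $b$ is drawn independently of the reports, both sides are linear in the distribution of $b$. The sufficient condition becomes $\tau_i\,\E b/d\ge\tau_i\,\E(d-b)/d$, which holds exactly when $\E b\ge d/2$. If $\tau_i=0$, agent $i$ never affects any outcome, so the condition holds trivially.

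I expect the main obstacle to be Step 2, the spectral bookkeeping for $H_{-i}$. The delicate point is justifying that its principal-minor sums match the elementary symmetric functions of the spectrum of $H-hh^\top$. This follows because $H_{-i}$ is the compression of $H$ to $e_i^\perp$ and shares its nonzero eigenvalues with $H(I-e_ie_i^\top)H$. The subgradient argument in Step 1 is routine once the Schur-complement form of the slope is in hand.
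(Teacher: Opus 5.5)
Your proposal is correct and follows essentially the paper's argument. It uses the same Schur-complement identification of an unselected agent's influence, $h_i^B=\det(H_{B\cup\{i\}})/\det(H_B)$. It uses the same principal-minor/eigenvalue computation to obtain $\Prb(i\in B)=\frac bd H_{ii}$ and $\E_B[\mathbf{1}_{\{i\notin B\}}h_i^B]=\frac{d-b}{d}H_{ii}$, and it relies on linearity in the law of $b$. The only cosmetic difference is the final step: you use one-sided derivatives of a convex function, where the paper uses the direct bound $\E_B\Delta_i^B\ge|\delta|\bigl(\Prb(i\in B)-\E_B[\mathbf{1}_{\{i\notin B\}}h_i^B]\bigr)$.

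One small point should be made explicit in Step 2. Sets $B'\ni i$ with $\det(H_{B'\setminus\{i\}})=0$ do not arise from a sampled $B$. They nonetheless contribute nothing to the sum, because if $H_{B'}$ were positive definite, its principal submatrix $H_{B'\setminus\{i\}}$ would be positive definite too.
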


\begin{proof}[Proof Sketch]
Fix an agent $i$, the other reports, and a deviation from their true label $y_i$ to $y_i+\delta$. When $i\in B$, the mechanism interpolates their report, so the deviation increases their loss from zero to $|\delta|$. When $i\notin B$, the agent can influence only the part of the fit that remains free after interpolation.

To quantify this remaining influence, let $H^B$ be the orthogonal projector onto $\{z\in L:z_B=0\}$, and set $h_i^B=(H^B)_{ii}\ge0$. The constrained least-squares formula in \eqref{eq:constrained-loss-main} shows that an unselected agent's prediction changes by $h_i^B\delta$. Their absolute loss can therefore decrease by at most $h_i^B|\delta|$.

For each fixed sample size $b$, we prove in \Cref{app:partial-spie} that the sampling rule gives
\begin{equation}
\Prb(i\in B)=\frac bd H_{ii},
\qquad
\E_B[\mathbf{1}_{\{i\notin B\}}h_i^B]=\frac{d-b}{d}H_{ii}.
\label{eq:balance}
\end{equation}
Intuitively, the first formula allocates the $b$ interpolation slots in proportion to each agent's leverage $H_{ii}$, whose total is $d$. For the second formula, the $b$ interpolation constraints fix $b$ of the $d$ prediction directions. The determinant weights leave, on average, a fraction $(d-b)/d$ of each agent's original influence in the remaining fit. Thus larger samples make an agent more likely to be selected while reducing the expected influence remaining outside the sampled set. 

Let $\Delta_i^B$ be the agent's loss after the deviation minus their truthful loss. Since the sampling distribution is independent of reports, the two cases above give
\[
\E_B\Delta_i^B
\ge |\delta|\left(\Prb(i\in B)-\E_B[\mathbf{1}_{\{i\notin B\}}h_i^B]\right)
=\frac{2b-d}{d}H_{ii}|\delta|.
\]
Averaging over the report-independent choice of $b$ replaces $b$ by $\E b$. The expected loss cannot decrease when $\E b\ge d/2$, which proves strategyproofness in expectation.
\end{proof}

To obtain the final bound, take $b=d/2$ when $d$ is even, and mix $b=(d-1)/2$ and $b=(d+1)/2$ equally when $d$ is odd. In both cases $\E b=d/2$, so \Cref{lem:partial-spie} gives SPIE. Averaging the corresponding bounds in \Cref{lem:partial-efficiency} gives the ratio below.

\begin{theorem}[A uniform sub-$2$ upper bound]
\label{thm:absolute}
For absolute individual loss and the complete hypothesis class $\Theta=\R^d$, an SPIE mechanism achieves squared-error approximation ratio
\begin{equation*}
\rho_d=2-\frac{1}{\lceil d/2\rceil+1}.
\label{eq:rho}
\end{equation*}

\end{theorem}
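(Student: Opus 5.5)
We combine \Cref{lem:partial-efficiency} and \Cref{lem:partial-spie} with the report-independent choice of $b$ already indicated before the statement. When $d$ is even, take $b=d/2$ deterministically. When $d$ is odd, draw $b\in\{(d-1)/2,(d+1)/2\}$ uniformly and independently of the reports. In both cases $\E b=d/2$, so \Cref{lem:partial-spie} gives SPIE for absolute individual loss. The mixed mechanism is a report-independent mixture of fixed-size rules, and the SPIE argument in that lemma already averages over $b$, so no further incentive work is needed.

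For efficiency, conditional on $b$ the expected cost is at most $\frac{d+1}{d-b+1}\OPT(y)$ by \Cref{lem:partial-efficiency}. Averaging over $b$ is linear, so the ratio is the corresponding average, and it remains to check it against $\rho_d$.

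If $d=2m$ and $b=m$, then
\[
\frac{d+1}{d-b+1}=\frac{2m+1}{m+1}=2-\frac{1}{m+1}=2-\frac{1}{\lceil d/2\rceil+1}.
\]

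If $d=2m+1$, the two sizes are $m$ and $m+1$, giving the ratios $\frac{2m+2}{m+2}$ and $\frac{2m+2}{m+1}=2$. Their average is
\[
\frac12\Bigl(\frac{2m+2}{m+2}+2\Bigr)=\frac{2m+3}{m+2}=2-\frac{1}{m+2}.
\]
Since $\lceil d/2\rceil=m+1$, this equals $\rho_d$.

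The only step needing care is the odd case. One should confirm that averaging the per-$b$ bounds is legitimate, which holds because $b$ is drawn independently of $y$ and of $B$'s conditional law. The arithmetic is straightforward, so there is no substantive obstacle beyond citing the two lemmas correctly.
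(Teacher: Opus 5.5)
Your proposal is correct and follows essentially the same route as the paper. The paper also takes $b=d/2$ for even $d$ and an equal mixture of $(d\pm1)/2$ for odd $d$, invokes \Cref{lem:partial-spie} via $\E b=d/2$, and averages the per-$b$ bounds of \Cref{lem:partial-efficiency}, which gives the value $\rho_d$ that you compute.
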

As an example, in constant fitting ($d=1$), the rule returns the mean with probability $1/2$ and a uniformly chosen report otherwise, recovering the optimal $3/2$ mechanism of \citet{feldman2013}. 


\paragraph{The absolute-loss bound is tight.} Finally, we prove tightness of the above result. 

\label{sec:absolute-lower}
\begin{theorem}[Optimality under absolute individual loss]
\label{thm:absolute-lower}
For every $d\ge1$, there is a full-column-rank design with $\Theta=\R^d$ on which every absolute-loss SPIE mechanism has squared-error approximation ratio at least $\rho_d$. 
\end{theorem}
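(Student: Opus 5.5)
The plan is to build one highly symmetric design, reduce an arbitrary SPIE mechanism on it to a symmetric one, and obtain $\rho_d$ from a finite chain of SPIE inequalities, solved as a small linear program.

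\textbf{Design and symmetrization.} I would take $n=d+1$ agents and a design whose column space is $L=\{z\in\R^{d+1}:\ones^\top z=0\}$. For example, let $X$ have columns $e_j-e_{d+1}$ for $j\le d$; this has full column rank. Then the orthogonal complement of $L$ is spanned by $\ones$. The least-squares residual is $r^\star=\frac{\ones^\top y}{d+1}\ones$, so $\OPT(y)=(\ones^\top y)^2/(d+1)$. A feasible prediction $z$ splits into the fixed part $(I-H)y$ and a vector in $L$. Hence
\[
C(z;y)=\OPT(y)+\norm{z-Hy}_2^2 .
\]
The design is invariant under permutations of agents. Given any SPIE mechanism $M$, I would average $M\circ\pi$ over all permutations $\pi$. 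Each $M\circ\pi$ is SPIE, and SPIE is preserved under report-independent mixing, so the average $\bar M$ is SPIE. Its approximation ratio is no worse than that of $M$, because the ratio is a worst case over profiles and the set of profiles is permutation-invariant. So we may assume $M$ is permutation-equivariant.

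\textbf{Reduction to means on $0/1$ profiles.} Next I would use Jensen's inequality in two places. For the cost, $\E C(Z;y)\ge C(\E Z;y)$. For absolute loss, $\E|Z_i-y_i|\ge|\E Z_i-y_i|$, and this is an equality whenever $Z_i$ lies almost surely on one side of $y_i$. I would restrict attention to profiles $y^{(a)}=\ones_S$ with $|S|=a$, for $a=0,\dots,d+1$. By symmetry, the mean prediction is $u_a$ on $S$ and $v_a$ off $S$. Feasibility in $L$ gives $a\,u_a+(d+1-a)v_a=0$, and $\OPT(y^{(a)})=a^2/(d+1)$. At the endpoints, $a=0$ and $a=d+1$ have $\OPT=0$, which forces $Z$ to equal $0$ and $\ones-\frac{d+1}{d+1}\ones=0$ respectively. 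More usefully, I would also use shifted profiles such as $\ones_S-\frac{a}{d+1}\ones$. The nontrivial constraints come from one agent switching between the labels $0$ and $1$. SPIE between $y^{(a)}$ and $y^{(a+1)}$ yields inequalities linking $(u_a,v_a)$ to $(u_{a+1},v_{a+1})$, in both directions: the agent whose true label is $0$ reporting $1$, and the agent whose true label is $1$ reporting $0$.

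\textbf{Parity and the linear program.} The answer depends on $\lceil d/2\rceil$. This suggests the binding profiles are the balanced ones, with $a\approx(d+1)/2$, where deviations toward either side are symmetric. I would set up the resulting linear program in the variables $(u_a,v_a)$. The objective is the maximum over $a$ of the ratio $\bigl(\OPT+\norm{\text{mean}-Hy}^2\bigr)/\OPT$, with the SPIE chain as constraints. I would then exhibit a dual certificate, namely a weighted sum of the SPIE inequalities along the chain $a=0,1,\dots,\lceil d/2\rceil$, showing that some profile has ratio at least $2-1/(\lceil d/2\rceil+1)$. As a sanity check, for $d=1$ this should reduce to the two-agent $3/2$ argument of \citet{feldman2013}.

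\textbf{Main obstacle.} The main obstacle is the Jensen step for the individual loss. That step is only an inequality, and it points in the wrong direction for extracting upper bounds on the deviation gain. I would try first to show that an optimal symmetric mechanism may be assumed to place $Z_i$ on a single side of each agent's label on the extremal profiles, which makes the step an equality. If that fails, I would instead keep the full distribution of the predictions and use the cost identity $\norm{z-Hy}^2$ together with $\E|Z_i-y_i|$ directly in the linear program. A secondary difficulty is identifying exactly which profile in the chain is extremal, and so the correct dual weights.
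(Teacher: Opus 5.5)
\textbf{Verdict: there is a genuine gap.} Your permutation symmetrization is valid. The problem is that the finite linear program you describe cannot reach $\rho_d$, however the Jensen step is handled. Take $d=1$, $X=(1,-1)^\top$, $L=\ones^\perp$. On your four profiles, let the mechanism do the following:
\begin{itemize}
\item output $0$ at $y=0$;
\item output the least-squares fit $Hy$ at $y=e_1$ and $y=e_2$;
\item at $y=\ones$, output $(\tfrac12,-\tfrac12)$ or $(-\tfrac12,\tfrac12)$, each with probability $\tfrac12$.
\end{itemize}
This assignment is permutation-symmetric. Its ratio at $\ones$ is $(5/2)/2=5/4$, and the other profiles are fitted optimally. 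Every $0\leftrightarrow1$ switch is unprofitable:
\begin{itemize}
\item agent $1$ at $e_2$ has truthful loss $\tfrac12$, and also loss $\tfrac12$ after reporting $1$;
\item agent $1$ at $\ones$ has truthful loss $1$, and loss $\tfrac32$ after reporting $0$;
\item the switches at $0$ and $e_1$ are trivially unprofitable, and agent $2$ follows by symmetry.
\end{itemize}
The exactly fitted shifted profiles do not help either: fitting least squares at the new profiles they lead to keeps every switch unprofitable. So the primal value of your program is at most $5/4<3/2=\rho_1$, and the dual certificate you are looking for does not exist. Passing to means can only weaken the bound further.

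The mean reduction is also not available in the first place. For absolute loss, an agent's marginal incentive is governed by $\Prb(Z_i<y_i)$, not by $\E Z_i$. So SPIE does not yield constraints on $(u_a,v_a)$, as you suspected. Separately, $\OPT(\ones)=d+1$, not $0$, because $\ones\perp L$.

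\textbf{What is missing} is the infinitesimal form of SPIE together with a continuum of scales. The paper works with profiles $y(w,t)=w+(t/n)\ones$, $w\in L$.
\begin{itemize}
\item \emph{Incentive side.} Truthfulness bounds the derivative of agent $i$'s expected absolute loss, in their own label, below by $2\Prb(r_i>0)-1$. Averaging $w$ over a large ball in $L$ aligns the different agents' deviation paths. The total loss then satisfies $F'(t)\ge 2N(t)-n-o(1)$, where $N(t)$ is the expected number of positive residuals.
\item \emph{Accuracy side.} A Cauchy--Schwarz count with integer rounding gives $P(v)-A(v)\ge 2k+1-k(k+1)\norm{v}_2^2$ for the normalized residual $v=r/t$. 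Here $P$ counts positive coordinates, $A$ is the negative mass, and $k=\lfloor(n+1)/2\rfloor$. This integrality, not a balanced $0/1$ profile, is where the parity of $d$ enters.
\end{itemize}
If $c<\rho_d$, the two combine into $(F(t)/t)'\ge 2\Delta/t$ for some $\Delta>0$. Hence $F(t)/t$ grows like $\log t$, contradicting $F(t)\le\sqrt c\,t$.

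The contradiction only appears at scale ratios $T$ with $2\Delta\log T>\sqrt c-1$, and these blow up as $c\uparrow\rho_d$. That is consistent with the failure of your single-scale chain above.
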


\begin{proof}[Proof Sketch]
Set $n=d+1$ and take $X$ with first $d$ rows $I_d$ and last row $-\ones^\top$, so predictions sum to zero. For labels summing to $t>0$, residuals therefore sum to $t$. Least squares spreads this residual equally, giving $\OPT(y)=t^2/n$. For a $c$-approximation, the normalized residual $V=(y-Z(y))/t$ satisfies $\sum_i V_i=1$ and $\E\norm V_2^2\le c/n$.

Accuracy favors spreading this unit residual over many agents. Let $P$ count the positive coordinates of $V$, and let $A$ be its total negative mass. The positive mass is $1+A$, so Cauchy--Schwarz gives $\norm V_2^2\ge(1+A)^2/P$. Thus few positive coordinates are costly, and negative residuals require still more positive mass. Accounting for the fact that $P$ is an integer, \Cref{app:absolute-lower} shows that beating $\rho_d$ would require
\[
\E[P-A]\ge\frac{n+1}{2}+\Delta
\]
for some constant $\Delta>0$, uniformly over all profiles with $t>0$.

Incentives constrain this spreading: truthfulness gives a lower bound $2\Prb(V_i>0)-1$ on the derivative of agent $i$'s expected absolute loss with respect to their own label, wherever it exists. If we could sum these rates along a common path of profiles with total $t$, the total expected absolute loss $F(t)$ would satisfy
\[
F'(t)\ge 2\E P-n\ge\frac{F(t)}t+2\Delta,
\]
since $F(t)/t=\E\norm V_1=1+2\E A$. This gives $(F(t)/t)'\ge2\Delta/t$, forcing normalized loss to grow without bound, whereas accuracy requires $F(t)/t\le\sqrt c$.

The agents' report paths do differ, but averaging over a growing ball in the zero-sum prediction space makes the common-path calculation accurate up to a vanishing error on each fixed $t$-interval. \Cref{app:absolute-lower} justifies this step, completing the contradiction and proving $c\ge\rho_d$.
\end{proof}

\subsection{Squared individual loss: a lower bound at \texorpdfstring{$p=2$}{p=2}}
\label{sec:lower}
Next, we consider what happens when the individual loss has the same form as the social loss and each agent aims to minimize their expected squared loss. Perhaps surprisingly, unlike with absolute individual loss, randomization provides no improvement over deterministic mechanisms in this case.

\begin{theorem}[Optimality under squared individual loss]
\label{thm:lower}
For each $d\ge1$, there is a full-column-rank design with $\Theta=\R^d$ such that every SPIE mechanism for squared individual loss has approximation ratio at least $d+1$. 
\end{theorem}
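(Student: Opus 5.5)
The plan is to turn SPIE for squared loss into a Myerson-type characterization and then feed it into an integral identity for the social cost, evaluated on a design where the social and individual losses line up. For a fixed agent $i$ and fixed $y_{-i}$, write $m(y)=\E Z(y)$ and $v_i(y)=\operatorname{Var} Z_i(y)$. Agent $i$'s expected loss at true label $y_i$ and report $s$ is
\[
\E (Z_i(s,y_{-i})-y_i)^2=y_i^2-2y_i\,m_i(s,y_{-i})+\bigl(m_i^2+v_i\bigr)(s,y_{-i}).
\]
This is a quasi-linear screening problem with ``allocation'' $2m_i$ and ``payment'' $m_i^2+v_i$. The standard argument then gives two consequences of SPIE: (i) $m_i$ is nondecreasing in the agent's own report, and (ii) the truthful loss $L_i(y)=\E(Z_i(y)-y_i)^2$ is absolutely continuous in $y_i$, with
\[
\partial_{y_i}L_i(y)=2\bigl(y_i-m_i(y)\bigr)\quad\text{a.e.}
\]
Since $\sum_i L_i(y)=\E C(Z(y);y)$, the social cost is governed by the same envelope identity, with no auxiliary loss to compare against. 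This is what should make $p=2$ rigid, in contrast with the slack exploited in \Cref{thm:absolute}.

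I would use the design from \Cref{thm:absolute-lower}: $n=d+1$ and $X$ with rows $I_d$ and $-\ones^\top$, so that $L=\{z:\sum_i z_i=0\}$, and $\OPT(y)=t^2/n$ where $t=\sum_i y_i$. Profiles in $L$ have $\OPT=0$, so any finite approximation returns them exactly and the social cost vanishes there. Write $g_i(y)=y_i-m_i(y)$ for agent $i$'s mean residual; since $\sum_i m_i=0$, these satisfy $\sum_i g_i=t$. The cost then admits two bounds. The envelope identity, integrated along coordinate paths from a point $y^0\in L$, gives an upper-type expression
\[
S(y^0+s e_i)=\int_0^s 2g_i(y^0+u e_i)\,du .
\]
Bias--variance gives the lower bound
\[
S(y)\ge\sum_j g_j(y)^2+\sum_j v_j(y).
\]
The approximation hypothesis requires $S\le c\,t^2/n$.

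A naive single-path calculation, with $g_1$ free and the other mean residuals spread evenly, only forces $c\ge1$. The real content must therefore come from two further constraints. The first is monotonicity (i): it forces $g_i$ to grow with $s$ at rate at least $1$ minus a bounded slope. The second is consistency across paths, because $S$ is a single function of $y$. Concretely, I would show that along the direction $e_i$ the residual slope $\partial_{y_i}g_i$ must average to at least $1-\tau_i=d/(d+1)$, while the other agents' mean residuals, whose sum is $t-g_i$, must be consistent with their own envelope identities. Taking $\sum_i$ of the envelope identity along the symmetric direction and comparing with paths along individual $e_i$ should then produce a differential inequality in $t$ for $\bar S(t)$, the average of $S$ over profiles with $\sum_i y_i=t$. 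To handle the dependence on $y^0$, I would average over a large ball in $L$, exactly as in the proof of \Cref{thm:absolute-lower}. The target inequality is one whose only solution compatible with $\bar S(0)=0$ is $\bar S(t)\ge t^2=(d+1)\,t^2/n$. This is the bound attained by interpolating a random single agent against the zero-sum constraint.

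The main obstacle is this last step: extracting the factor $d+1$ rather than $1$. One has to show that a mechanism cannot keep the mean residual $g_i$ small for the deviating agent while also keeping variances small. Monotonicity lets the mean prediction respond to the agent's report, but, unlike the absolute-loss case, the variance terms enter the agent's own payoff. So randomization cannot be used to punish deviations without paying for it in the social objective. I would first attempt a direct potential-function argument: show that
\[
\Phi(y)=S(y)-\sum_j(y_j-\bar y)^2\cdot\text{const}
\]
is monotone along the averaged paths. If that fails, I would fall back on a discretized version of the ball-averaging argument from \Cref{app:absolute-lower}, with the envelope identity replacing the derivative bound $2\Prb(V_i>0)-1$.
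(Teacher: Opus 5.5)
There is a genuine gap. The step you flag as the main obstacle, extracting $d+1$ rather than $1$, is the entire content of the proof. The route you sketch toward it does not work, for three reasons.

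\begin{enumerate}
\item The identity $S(y^0+se_i)=\int_0^s 2g_i(y^0+ue_i)\,du$ is wrong. Along a unilateral path in coordinate $i$, the envelope formula computes only agent $i$'s loss $L_i(y^0+se_i)$. The other agents generally also incur loss at that profile, so you only get $S\ge L_i$.
\item The proposed slope bound $\partial_{y_i}g_i\ge 1-\tau_i=d/(d+1)$ cannot hold. On this design $\tau_i=H_{ii}=d/(d+1)$, so $1-\tau_i=1/(d+1)$. Both OLS and full-size volume sampling have $g_i\equiv t/(d+1)$, i.e.\ slope $1/(d+1)$, and volume sampling is universally strategyproof, hence SPIE. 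Moreover, monotonicity of $m_i$ gives only the upper bound $\partial_{y_i}g_i\le 1$, not a lower bound.
\item A path along the symmetric direction $\ones$ is not a unilateral deviation, so the envelope formula says nothing about it.
\end{enumerate}

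None of the extra machinery is needed: not monotonicity beyond deriving the envelope formula, not the variance terms, not a differential inequality in $t$, and not a potential function.

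The missing idea is the choice of starting points. Fix the terminal profile $y$ with $\sum_j y_j=t>0$.
\begin{itemize}
\item For \emph{each} agent $j$, the unilateral report $y_j-t$ yields $y-te_j\in L$. Any finite-ratio mechanism must fit that profile exactly, so the envelope formula applies to every agent at the same $y$.
\item This gives the exact identity $S(y)=\sum_j L_j(y)=2\sum_j\int_0^t g_j\bigl(y-(t-s)e_j\bigr)\,ds$.
\item At parameter $s$, all $n$ path profiles have coordinate sum $s$. They differ only by translations $-(t-s)p_j$ with $p_j=e_j-\ones/n\in L$.
\item Average $y$ uniformly over $\{w\in L:\norm{w}_2\le R\}+(t/n)\ones$. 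The bound $|g_j(v)|\le\sqrt{c/n}\,s$, from the approximation guarantee and Jensen, makes the translations cost $O(t^3/R)$.
\item You then evaluate $\sum_j g_j$ at a common profile with sum $s$. There it equals $s$ exactly, because $Z\in L$.
\end{itemize}
Hence $\E S(Y_R)\to 2\int_0^t s\,ds=t^2$, while pointwise $c$-approximation gives $\E S(Y_R)\le c\,t^2/n$ for every $R$. So $c\ge n=d+1$. This is an identity at each fixed $t$, and no growth argument in $t$ is required. Your plan never uses the fact that every agent has their own exact starting point $y-te_j$ for the same terminal profile, and without it the factor $d+1$ does not appear.
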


\begin{proof}[Proof Sketch]
Use the same design as in \Cref{thm:absolute-lower}, with $n=d+1$ and $L=\ones^\perp$. For a profile $y$ with $\sum_i y_i=\tau>0$, the optimum is $\tau^2/n$.

Suppose the mechanism has finite approximation ratio $c$. Reporting $y_i-\tau$ makes the profile $y-\tau e_i$ realizable, so it must be fitted exactly. Starting from this report and increasing it to $y_i$, the squared-loss incentive inequalities give
\begin{equation*}
C_i(y)=2\int_0^\tau r_i\bigl(y-(\tau-s)e_i\bigr)\,ds,
\label{eq:envelope-main}
\end{equation*}
where $C_i(v)=\E[(Z_i(v)-v_i)^2]$ and $r_i(v)=v_i-\E Z_i(v)$. At parameter $s$, agent $i$ reports $y_i-\tau+s$, so the sum of all reports is $s$.

If the integrands were evaluated at a common profile $v(s)$ with $\sum_jv_j(s)=s$, feasibility would give $\sum_i r_i(v(s))=s$. Their sum would therefore be
\[
2\int_0^\tau\sum_i r_i(v(s))\,ds
=2\int_0^\tau s\,ds=\tau^2,
\]
which is $n$ times the optimum. 

Again, different agents follow different paths, but the same translation averaging from \Cref{thm:absolute-lower} justifies summing: the paths differ by bounded shifts within $L$, whose contribution vanishes when averaged over larger balls. Thus the averaged social cost tends to $\tau^2$, while remaining at most $c\tau^2/n$, giving $c\ge n=d+1$. Unlike absolute loss, squared loss determines this sum at each fixed $\tau$, and no growth argument over $\tau$ is needed. \Cref{app:lower} supplies the details.
\end{proof}

The same worst-case lower bound holds for affine predictors on $\R^{d-1}$ with a free intercept as well, as detailed in \Cref{app:affine-lower}. Thus, the optimal worst-case ratio is $d+1$ for squared-loss SPIE. Since all positive powers induce the same deterministic preferences, every universally strategyproof mechanism is also squared-loss SPIE, demonstrating the optimality of \Cref{thm:deterministic}.

\subsection{Other powers: an upper bound for every \texorpdfstring{$p\ne2$}{p not equal to 2}}
\label{sec:powers}

In general, do SPIE mechanisms improve on deterministic mechanisms? In other words, is absolute loss special, or is squared loss the exception? The next construction shows that it is the latter. It combines two independent interpolating fits through a small lottery. Averaging the fits improves accuracy; additional outcomes make truthful reporting optimal. The hypothesis class is again complete, $\Theta=\R^d$.

\begin{theorem}[Randomization helps at every positive power except two]
\label{thm:powers}
For every $p>0$ and every full-column-rank public design with $d$ free coefficients, the complete hypothesis class $\Theta=\R^d$ admits an SPIE mechanism with squared-error approximation ratio at most $1+dq_p$, where, for $\kappa=1+\sqrt2$,
\begin{equation}
q_p=
\begin{cases}
\dfrac{2+2^{p-1}}{2+2^p},&0<p<1,\\[5pt]
\dfrac12+\lambda_p,\quad
\lambda_p=\dfrac{2}{2+\kappa^{p-2}+\kappa^{2-p}},&p\ge1.
\end{cases}
\label{eq:power-factor}
\end{equation}
Here $q_2=1$ and $q_p<1$ for every $p\ne2$.
\end{theorem}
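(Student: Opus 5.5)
The mechanism is a lottery over affine combinations of two independent full interpolations. Draw $B_1,B_2$ independently from volume sampling, i.e.\ \eqref{eq:partial-interpolation} with $b=d$. Let $Z^{1}=Z^{B_1}(y)$ and $Z^{2}=Z^{B_2}(y)$. Independently of reports, draw a weight $\alpha$ from a finite distribution that is symmetric under $\alpha\mapsto1-\alpha$, and output $\alpha Z^1+(1-\alpha)Z^2\in L$. The proof has three parts: an efficiency identity valid for any such $\alpha$, a pointwise incentive condition on the law of $\alpha$, and a choice of that law which yields $q_p$.

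\emph{Efficiency.} By \eqref{eq:constrained-loss-main} with $b=d$, we have $Z^{B}=Hy+q^{B}$ with $q^B\in L$ and $r^\star\perp L$. The output error therefore equals
\[
\norm{r^\star}_2^2+\norm{\alpha q^{B_1}+(1-\alpha)q^{B_2}}_2^2 .
\]
Two facts finish this part. First, \Cref{lem:partial-efficiency} at $b=d$ is tight for volume sampling, so $\E\norm{q^B}_2^2=d\,\OPT(y)$; I would verify tightness from the determinant argument, where for $b=d$ no minors are discarded. Second, volume sampling is unbiased, $\E_B q^B=0$; I would prove this from the known identity $\E[Z^B]=Hy$, or by a Cramer's-rule/determinant expansion in the same spirit as \eqref{eq:rank-one-minors-main}. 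By independence the cross term vanishes, so the expected cost is $\bigl(1+d\,\E[\alpha^2+(1-\alpha)^2]\bigr)\OPT(y)$. Hence $q_p=\E[\alpha^2+(1-\alpha)^2]$.

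\emph{Incentives.} With $b=d$ the interpolant depends only on $y_B$, so agent $i$ influences $Z^{j}$ only when $i\in B_j$, and then $Z^j_i$ equals the report.
\begin{itemize}
\item If $i\in B_1\cap B_2$, a deviation $\delta$ yields loss $|\delta|^p$ whatever $\alpha$ is, so truth is optimal.
\item If $i$ is in neither set, the agent has no influence.
\item If $i\in B_1\setminus B_2$ (the other case is symmetric), let $e=Z^2_i-y_i$, which is independent of the report. The loss is $|\alpha\delta+(1-\alpha)e|^p$.
\end{itemize}
Thus SPIE reduces to the sufficient, design-free condition that for every $e\in\R$, the map $\delta\mapsto\E_\alpha|\alpha\delta+(1-\alpha)e|^p$ is minimized at $\delta=0$. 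Taking expectations over $e$ and over the event $i\in B_1\setminus B_2$ then gives SPIE.

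\emph{Case $0<p<1$.} Put mass $w$ on $\alpha=1$, mass $w$ on $\alpha=0$, and mass $m$ on $\alpha=\tfrac12$, with $2w+m=1$. The condition becomes
\[
w|\delta|^p+m2^{-p}|\delta+e|^p\ge m2^{-p}|e|^p .
\]
Subadditivity of $t\mapsto t^p$ gives $|e|^p\le|\delta|^p+|\delta+e|^p$, so it suffices that $w\ge m2^{-p}$. Taking equality gives $w=1/(2+2^p)$ and $m=2^p/(2+2^p)$. Then $q_p=2w+m/2=(2+2^{p-1})/(2+2^p)$, as claimed.

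\emph{Case $p\ge1$.} The objective is convex in $\delta$, so it suffices that its one-sided derivatives at $0$ have the right signs. Write $e=\pm1$ by homogeneity. The condition becomes a sign/balance condition such as
\[
\E\bigl[\alpha|1-\alpha|^{p-1}\operatorname{sgn}(1-\alpha)\bigr]=0 ,
\]
together with a subgradient inequality at any atom where $1-\alpha=0$. The mass at $\alpha=\tfrac12$ pushes toward manipulation. It must be compensated by atoms outside $[0,1]$, i.e.\ extrapolation $\alpha=-a$ and $1+a$, or by atoms at $\alpha\in\{0,1\}$. I would optimize $\E[\alpha^2+(1-\alpha)^2]$ over a three- or five-atom symmetric family subject to this constraint. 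I expect the optimum to occur at a ratio $|\alpha|/|1-\alpha|=\kappa^{\pm1}$, with $\kappa=1+\sqrt2$, producing $\lambda_p=2/(2+\kappa^{p-2}+\kappa^{2-p})$.

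This optimization, i.e.\ finding the exact lottery that attains the stated $\lambda_p$ and checking the first-order condition including the kink at $\alpha=1$ when $p=1$, is the step I expect to be the main obstacle. The rest is routine. One checks $q_2=1$ since $\kappa^0=1$ gives $\lambda_2=\tfrac12$. One checks $q_p<1$ for $p\ne2$ via AM--GM: $\kappa^{p-2}+\kappa^{2-p}>2$, so $\lambda_p<\tfrac12$. For $p<1$, $2+2^{p-1}<2+2^p$.
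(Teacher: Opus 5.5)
There is a genuine gap. Your $0<p<1$ case is complete and uses exactly the paper's lottery. Your efficiency identity is correct, and so is your reduction of SPIE to the scalar condition that $\delta\mapsto\E_\alpha|\alpha\delta+(1-\alpha)e|^p$ be minimized at $\delta=0$.

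\textbf{The missing case $p\ge1$.} Here the statement asserts the specific value $q_p=\tfrac12+\lambda_p$. You neither exhibit a lottery attaining it nor verify the balance condition; you defer both to an optimization you do not carry out. No optimization is needed, and your guessed ratio $\kappa^{\pm1}$ is off.

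One explicit three-atom law suffices. Take $\alpha=\tfrac12$ with probability $1-\lambda_p$, and $\alpha=-a$, $\alpha=1+a$ with probability $\lambda_p/2$ each, where $a=(\sqrt2-1)/2=1/(2\kappa)$. Then $1+a=\kappa/2$, so the ratio $|\alpha|/|1-\alpha|$ at the extrapolated atoms is $\kappa^{\pm2}$. Two identities do all the work: $a(1+a)=\tfrac14$ and $a^2+(1+a)^2=\tfrac32$.

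\emph{Efficiency.} The second identity gives $\E[\alpha^2+(1-\alpha)^2]=\tfrac12(1-\lambda_p)+\tfrac32\lambda_p=\tfrac12+\lambda_p$.

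\emph{Incentives.} Write $\psi_p(x)=\operatorname{sgn}(x)|x|^{p-1}$. Your first-order condition $\E[\alpha\,\psi_p(1-\alpha)]=0$ becomes $(1-\lambda_p)2^{-p}=\tfrac{\lambda_p}{2}\bigl((1+a)a^{p-1}+a(1+a)^{p-1}\bigr)$. The first identity gives $2^{p-1}\bigl((1+a)a^{p-1}+a(1+a)^{p-1}\bigr)=\tfrac12(\kappa^{p-2}+\kappa^{2-p})$, which is exactly $(1-\lambda_p)/\lambda_p$.

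\emph{No kink.} No atom sits at $\alpha\in\{0,1\}$. So for $e\ne0$ every term $(1-\alpha)e$ is nonzero, and the expected loss is differentiable at $\delta=0$ even when $p=1$; the kink you were worried about never arises. Convexity then makes $\delta=0$ a global minimizer, and for $e=0$ truth has zero loss.

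\emph{Intuition.} Along this symmetric three-atom family the variance multiplier equals $\tfrac12+2\lambda(a+\tfrac12)^2$. At $p=2$ this is identically $1$ for every $a$. The value $a=(\sqrt2-1)/2$ is simply the choice with $2(a+\tfrac12)^2=1$, which makes the formula clean.

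\textbf{A smaller point on tightness.} You claim the partial-interpolation efficiency bound is tight at $b=d$, so that $\E\norm{q^B}_2^2=d\,\OPT(y)$. This is false without general position. Sets with $\det(H_B)=0$ are dropped from the sampled sum, yet $\det(H_B+t\,r^\star_B(r^\star_B)^\top)$ can still be positive for them. For instance, with $d=1$ and $X=(1,0)^\top$ the sample is always $\{1\}$, so $q^B=0$ while $\OPT(y)=y_2^2$. In general only $\E\norm{q^B}_2^2\le d\,\OPT(y)$ holds. That inequality is all you need, since the theorem asks only for a ratio of at most $1+dq_p$.
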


\begin{proof}[Proof Sketch] 
First, draw $B$ and $C$ independently using \eqref{eq:partial-interpolation} with $b=d$. The interpolating fits are $U=XX_B^{-1}y_B$ and $V=XX_C^{-1}y_C$. Singular sets have probability zero. These fits satisfy
\begin{equation}
\E U=Hy,\qquad
\sigma^2:=\E\norm{U-Hy}_2^2\le d\,\OPT(y).
\label{eq:two-fit-moments}
\end{equation}
The same statements hold for $V$. The second-moment bound follows from \Cref{lem:partial-efficiency} at $b=d$ and the orthogonality of $U-Hy$ and $y-Hy$. The unbiasedness formula $\E U=Hy$ is proved in \Cref{powers:volume-moments}. For $b=d$, these properties were also proved by \citet{derezinski2017} in the context of learning from small samples.

We consider $p\ge1$, where the loss is convex, and $0<p<1$, where subadditivity replaces convexity.

First suppose $p\ge1$. Put $M=(U+V)/2$ and return
\begin{equation}
Z=\begin{cases}
M,&\text{with probability }1-\lambda_p,\\
M+(U-V)/\sqrt2,&\text{with probability }\lambda_p/2,\\
M-(U-V)/\sqrt2,&\text{with probability }\lambda_p/2.
\end{cases}
\label{eq:power-lottery}
\end{equation}
The midpoint reduces variance, while the two extrapolated outcomes enforce incentives. To see the incentive balance, write $Z=(1-T)U+TV$, with $T\in\{1/2,(1-\sqrt2)/2,(1+\sqrt2)/2\}$. Conditional on $B,C$, an agent in neither sample has no influence, while an agent in both is fitted exactly. An agent in exactly one sample trades off the fit that interpolates their report against a fit independent of it. Their expected $p$-th power loss is convex in their report. The chosen $\lambda_p$ makes its derivative at truth zero when the two truthful predictions at that agent differ; if they coincide, truth has zero loss. Independence and unbiasedness give squared social cost $\OPT(y)+(\tfrac12+\lambda_p)\sigma^2$, proving the convex-range bound.

For $0<p<1$, convexity no longer holds. Instead return $U$ and $V$ each with probability $w_p=1/(2+2^p)$, and their midpoint with probability $2^pw_p$. If only one fit interpolates the agent, writing $s$ for the other fit's error and $\delta$ for their deviation gives expected loss $w_p(|\delta|^p+|s|^p+|s+\delta|^p)$. The inequality $|s|^p\le|\delta|^p+|s+\delta|^p$ makes truth optimal. The same variance calculation gives $q_p$ in \eqref{eq:power-factor}.
\end{proof}

\paragraph{Why is $p=2$ so different?}
Squared loss is special because the moment that balances incentives is exactly the moment that controls the benefit of averaging. To see this in the two-fit construction, condition on $B,C$ and consider an agent in $B\setminus C$; the other case is symmetric. If the other fit's prediction differs from their true value by $s\ne0$, their expected squared loss after a deviation $\delta$ is
$F(\delta)=\E[(T s+(1-T)\delta)^2]$.
For truthfulness conditional on the samples, we need
$F'(0)=2s\,\E[T(1-T)]=0$. But the variance multiplier of the combined fit is
\[
\E[(1-T)^2+T^2]
=1-2\E[T(1-T)]
=1.
\]
Thus averaging cannot reduce variance under this conditional incentive balance: the extrapolation needed for truthfulness exactly cancels the gain from the midpoint. In our lottery, this balance gives $\lambda_2=1/2$ and $q_2=1$.

\section{Conclusion and Discussion}
\label{sec:discussion}

The optimal worst-case deterministic and universally strategyproof price is $d+1$ for all single-peaked preferences, where $d$ counts free coefficients rather than observations. For SPIE, squared individual loss is the unique power that retains this price. Absolute loss has the exact optimum $\rho_d=2-1/(\lceil d/2\rceil+1)<2$, while each $p>0$ with $p\notin\{1,2\}$ also admits a strict improvement over $d+1$. 

Our SPIE mechanisms assume the complete hypothesis set. An immediate open problem is to obtain similar guarantees for arbitrary convex and compact hypothesis sets. A more ambitious question is to give a complete characterization of strategyproof mechanisms. Other interesting directions include going beyond linear regression, such as to non-linear models or deep learning, or incorporating external predictions and advice.

\label{main-text-end}

\clearpage
\subsection*{AI use statement}
ChatGPT 5.6-Sol and ChatGPT 6-Astra were used to assist with exploring proof sketches, identifying and checking references, and drafting the manuscript. For \cref{sec:deterministic}, the authors first considered the complete hypothesis set and proposed weighted $L_1$-ERM, with weights reflecting how strongly changes in the model can affect each agent's prediction. GPT derived the $(d+1)$ approximation guarantee and subsequently generalized the argument to arbitrary convex sets. For \cref{sec:spie}, the authors proposed volume sampling for the special case $b=d$, and GPT subsequently suggested selecting smaller values of $b$. For the lower bound, the authors first constructed a $(d+1)$ lower bound for all deterministic strategyproof mechanisms, and GPT extended it to SPIE mechanisms with squared loss at the authors' request. The authors are responsible for the correctness, originality, and accuracy of the final submission, including all AI-assisted text and mathematical arguments.

\bibliography{refs.bib}
\bibliographystyle{iclr2027_conference}

\clearpage
\appendix
\crefalias{section}{appendix}
\crefalias{subsection}{appendix}
\section{Implementation and analysis of partial interpolation}
\label{app:partial-spie}

We first give an iterative implementation of the sampling rule, then explain how an unselected report affects the fit and prove the two formulas in \eqref{eq:balance}. Together, these calculations complete the proof of \Cref{lem:partial-spie}. Throughout, $H$ is the rank-$d$ orthogonal projector onto $L$, and a size-$b$ set $B$ has probability $\det(H_B)/\binom db$.

\subsection{Iterative sampling and running time}
\label{app:iterative-sampling}

We use the Gram--Schmidt projection-DPP sampler presented by \citet[Section~2.4.4, Algorithm~1]{kulesza2012} and stop after $b$ selections. The calculation below shows that this yields the distribution in \eqref{eq:partial-interpolation}.

Compute a thin QR factorization $X=UR$, where $U\in\R^{n\times d}$ has orthonormal columns, so $H=UU^\top$. Write $u_i^\top$ for row $i$ of $U$. Initialize $v_i=u_i$ for every $i$ and $B=\varnothing$. For $t=0,\ldots,b-1$, perform the following steps:
\begin{enumerate}
\item Select an index $i_t$ with probability $\norm{v_{i_t}}_2^2/(d-t)$ and add it to $B$.
\item Set $q_t=v_{i_t}/\norm{v_{i_t}}_2$, and update every vector by $v_j\leftarrow v_j-\langle v_j,q_t\rangle q_t$.
\end{enumerate}
Return the unordered set $B$. For $b=0$, return the empty set. Each update removes the direction just selected, so subsequent selections favor observations that contribute a new direction. Previously selected rows have zero residual and cannot be selected again.

\paragraph{Correctness.}
The vectors $q_0,\ldots,q_{t-1}$ are orthonormal. At the start of step $t$, let $P_t=I_d-\sum_{s<t}q_sq_s^\top$, so $v_i=P_tu_i$. Since $P_t$ is a rank-$(d-t)$ projector and $U^\top U=I_d$, the probabilities at that step sum to one:
\[
\sum_{i=1}^n\norm{v_i}_2^2
=\operatorname{tr}(P_tU^\top UP_t)
=\operatorname{tr}(P_t)
=d-t.
\]
Let $U_B$ contain the selected rows in their selection order. Their coordinates in the orthonormal directions $q_0,\ldots,q_{b-1}$ form a lower-triangular matrix, with the residual lengths at selection on its diagonal. Thus $\det(U_BU_B^\top)=\det(H_B)$ is the product of the squared residual lengths. Multiplying the stepwise selection probabilities therefore gives, for $b\ge1$,
\[
\Prb(i_0,\ldots,i_{b-1})
=\frac{\det(H_B)}{d(d-1)\cdots(d-b+1)}.
\]
There are $b!$ orderings, so their probabilities sum to $\det(H_B)/\binom db$, as required.

\paragraph{Computational cost.}
The QR factorization costs $O(nd^2)$ arithmetic operations and can be reused because $X$ is public and fixed. Each sampling step updates $n$ vectors of length $d$, so drawing $B$ costs $O(ndb)$ after preprocessing. The fit can also be computed without forming the $n\times n$ matrix $H$: set $a=U^\top y$, solve $(U_BU_B^\top)c=y_B-U_Ba$, and return $U(a+U_B^\top c)$. This is the prediction in \eqref{eq:constrained-loss-main} and costs $O(nd+db^2+b^3)$ operations; when $b=0$, simply return $Ua$. Since $b\le d\le n$, the total cost is $O(nd^2)$. The two independent full-size draws used in \Cref{sec:powers} have the same asymptotic running time.

\subsection{How an unselected report affects the fit}
For a sampled nonempty set $B$, the projector onto the prediction directions that vanish on $B$ is
\begin{equation*}
H^B=H-H_{:,B}H_B^{-1}H_{B,:},
\qquad h_i^B=(H^B)_{ii}.
\label{eq:remaining-projector}
\end{equation*}
To see this, $H^2=H$ gives $H_{B,:}H_{:,B}=H_B$, so the subtracted matrix is the orthogonal projector onto the span of the columns $He_j$, $j\in B$. Within $L$, being orthogonal to these columns is equivalent to having $z_j=0$ for all $j\in B$, since $\langle He_j,z\rangle=z_j$. Thus $H^B$ retains exactly the allowed directions of change after interpolation, and $h_i^B\ge0$.

For $i\notin B$, changing $y_i$ by $\delta$ leaves $y_B$ unchanged. Substituting into the prediction formula in \eqref{eq:constrained-loss-main} gives $Z^B(y+\delta e_i)-Z^B(y)=\delta H^Be_i$. In particular, agent $i$'s prediction changes by $h_i^B\delta$. For $b=0$, take $H^\varnothing=H$: the agent is never selected and retains influence $H_{ii}$, proving both formulas in \eqref{eq:balance} in this case. Below, assume $1\le b\le d$.

\subsection{Selection probabilities and remaining influence}

\paragraph{A determinant expansion.}
For a symmetric $m\times m$ matrix $A$, let $A_C$ retain the rows and columns indexed by the same set $C$; its determinant is called a \emph{principal minor}, with $\det(A_\varnothing)=1$. If $\lambda_1,\ldots,\lambda_m$ are the eigenvalues of $A$, then
\begin{equation}
\sum_{\substack{C\subseteq[m]\\|C|=s}}\det(A_C)
=\sum_{1\le j_1<\cdots<j_s\le m}\lambda_{j_1}\cdots\lambda_{j_s},
\qquad s\ge1.
\label{eq:principal-minor-sum}
\end{equation}
For $1\le s\le m$, this is \citet[Theorem~1.2.16, p.~54]{horn2013}; for $s>m$, both sums are empty and equal zero. To prove it, expand $\det(I_m+tA)$ by linearity in each column. Choosing columns from $tA$ exactly at the indices in $C$ contributes $t^{|C|}\det(A_C)$, since the remaining columns come from $I_m$. On the other hand, the eigenvalues give $\det(I_m+tA)=\prod_{j=1}^m(1+t\lambda_j)$. Comparing coefficients of $t^s$ proves \eqref{eq:principal-minor-sum}. For $A=H$, only products of $s$ of its $d$ unit eigenvalues are nonzero, giving the normalization in \eqref{eq:minor-normalization}; the case $b=0$ has the single empty set of weight one.

\paragraph{Selection probability.}
Using the matrix $U$ from \Cref{app:iterative-sampling}, let $U_{-i}$ denote $U$ with row $i$ removed. The matrix obtained by deleting row and column $i$ from $H$ is $U_{-i}U_{-i}^\top$. Its nonzero eigenvalues coincide with those of
\[
U_{-i}^\top U_{-i}=I_d-u_i u_i^\top.
\]
The latter has $d-1$ eigenvalues equal to $1$ and one equal to $1-\norm{u_i}_2^2=1-H_{ii}$. Zero eigenvalues do not affect the sums in \eqref{eq:principal-minor-sum}. A product of $b$ eigenvalues therefore either uses only unit eigenvalues or uses $1-H_{ii}$ together with $b-1$ unit eigenvalues. Consequently,
\begin{align*}
\sum_{\substack{|B|=b\\i\notin B}}\det(H_B)
&=\binom{d-1}{b}+(1-H_{ii})\binom{d-1}{b-1},\\
\sum_{\substack{|B|=b\\i\in B}}\det(H_B)
&=\binom db-\sum_{\substack{|B|=b\\i\notin B}}\det(H_B)
=\binom{d-1}{b-1}H_{ii}.
\label{eq:minor-containing-i}
\end{align*}
Here a binomial coefficient is zero when its lower index exceeds its upper index. Dividing the second equality by $\binom db$ proves $\Prb(i\in B)=(b/d)H_{ii}$.

\paragraph{Expected remaining influence.}
For a sampled set $B$ with $i\notin B$, order the rows and columns of $H_{B\cup\{i\}}$ with $i$ last. Subtracting $H_{i,B}H_B^{-1}$ times its first $b$ rows from its last row gives
\begin{equation*}
\det(H_{B\cup\{i\}})
=\det(H_B)\bigl(H_{ii}-H_{i,B}H_B^{-1}H_{B,i}\bigr)
=\det(H_B)h_i^B.
\label{eq:schur-leverage}
\end{equation*}
Thus the sampling weight times the remaining influence is the determinant for the enlarged set $C=B\cup\{i\}$. If $\det(H_C)>0$, then $H_C$ and its principal submatrix $H_{C\setminus\{i\}}$ are positive definite. Hence enlarged sets not arising from a positive-weight $B$ contribute zero, and we may sum over all $C$. For $b<d$, the selection calculation for sets of size $b+1$ gives
\begin{align*}
\E_B[\mathbf{1}_{\{i\notin B\}}h_i^B]
&=\frac{1}{\binom db}\sum_{\substack{|C|=b+1\\i\in C}}\det(H_C)\\
&=\frac{\binom{d-1}{b}}{\binom db}H_{ii}
=\frac{d-b}{d}H_{ii}.
\end{align*}
For $b=d$, every enlarged determinant is zero because $H$ has rank $d$, giving the same formula.

Finally, a deviation by $\delta$ costs $|\delta|$ when the agent is selected and can reduce their loss by at most $h_i^B|\delta|$ otherwise. The two formulas in \eqref{eq:balance} therefore give an expected loss increase of at least $(2\E b-d)H_{ii}|\delta|/d$. This is nonnegative under the condition of \Cref{lem:partial-spie}, completing the proof.

\section{The lower bound for absolute individual loss}
\label{app:absolute-lower}

We prove \Cref{thm:absolute-lower} by summing the agents' incentive constraints. Set $n=d+1$ and take the design with first $d$ rows $I_d$ and last row $-\ones^\top$, whose prediction space is $L=\ones^\perp$. Write $y(w,t)=w+(t/n)\ones$ for $w\in L$ and $t>0$. Then $\OPT(y(w,t))=t^2/n$.

The argument has two parts. A geometric inequality shows how many positive residuals a sufficiently accurate mechanism must produce. Truthfulness then forces its absolute loss to grow too quickly. Averaging over translations $w$ lets us compare the agents' report paths at a common $t$.

\subsection{Accuracy requires many positive residuals}

For a vector $v$ with $\sum_i v_i=1$, define $P(v)=|\{i:v_i>0\}|$ and $A(v)=\sum_{i:v_i<0}(-v_i)$. Thus $1\le P(v)\le n$, its positive coordinates sum to $1+A(v)$, and $\norm v_1=1+2A(v)$.

\begin{lemma}[A geometric residual bound]
\label[lemma]{lem:absolute-moment}
Let $k=\lfloor(n+1)/2\rfloor$. Every $v\in\R^n$ with $\sum_i v_i=1$ satisfies
\begin{equation*}
P(v)-A(v)\ge 2k+1-k(k+1)\norm v_2^2.
\label{eq:absolute-pointwise-moment}
\end{equation*}
\end{lemma}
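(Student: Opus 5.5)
The plan is to discard the negative coordinates and use a Cauchy--Schwarz bound on the positive ones. This reduces the lemma to a two-variable inequality in the integer $m=P(v)$ and the real number $a=1+A(v)\ge1$. That inequality then follows from a monotonicity step in $a$ and an integrality step in $m$.

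\textbf{Reduction.} Since $\sum_i v_i=1$, at least one coordinate is positive, so $m=P(v)\in\{1,\ldots,n\}$. The positive coordinates sum to $1+A(v)=a$. Dropping the nonpositive coordinates and applying Cauchy--Schwarz to the $m$ positive ones gives
\[
\norm v_2^2\ge a^2/m .
\]
The right-hand side of the lemma is decreasing in $\norm v_2^2$ because $k(k+1)>0$. It therefore suffices to show
\[
g(m,a):=\frac{k(k+1)a^2}{m}-a+m-2k\ge0
\]
for all integers $1\le m\le n$ and all reals $a\ge1$. Indeed, $P-A=m-a+1$, so $g\ge 0$ is exactly $m-a+1\ge 2k+1-k(k+1)a^2/m$.

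\textbf{Monotonicity in $a$.} For fixed $m$, $g$ is a convex quadratic in $a$ with
\[
\partial_a g=\frac{2k(k+1)a}{m}-1 .
\]
At $a=1$ this derivative is nonnegative as soon as $m\le 2k(k+1)$. This holds because $k=\lfloor (n+1)/2\rfloor$ gives $2k\ge n\ge m$, and $k+1\ge1$. By convexity $g$ is then nondecreasing on $[1,\infty)$, so it suffices to check $a=1$, which corresponds to $A(v)=0$.

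\textbf{Integrality in $m$.} At $a=1$ we have
\[
g(m,1)=\frac{k(k+1)}{m}+m-2k-1=\frac{(m-k)\bigl(m-(k+1)\bigr)}{m}.
\]
This is nonnegative for every integer $m\ge1$, since no integer lies strictly between $k$ and $k+1$.

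This integrality step is the only delicate point, and it is why $k$ is chosen as it is. The bound is tight at $m\in\{k,k+1\}$ with equal positive entries and $A=0$. The condition $2k\ge n$ is exactly what makes the monotonicity step valid for every admissible $m$.
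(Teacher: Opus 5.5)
Your proposal is correct and follows essentially the paper's route: Cauchy--Schwarz on the positive coordinates, reduction to $A(v)=0$ via the condition $m\le n\le 2k(k+1)$, and the factorization $(m-k)(m-k-1)\ge0$ for integer $m$. Your convexity-in-$a$ monotonicity step matches the paper's tangent-line bound $(1+A)^2\ge 1+2A$ combined with $2/P\ge 1/[k(k+1)]$.
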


\begin{proof}
Cauchy--Schwarz applied to the positive coordinates gives $\norm v_2^2\ge(1+A(v))^2/P(v)$. To sharpen this bound, note that for every positive integer $p$,
\[
\frac1p-\frac{2k+1-p}{k(k+1)}
=\frac{(p-k)(p-k-1)}{pk(k+1)}\ge0.
\]
The last inequality holds because no integer lies strictly between $k$ and $k+1$. Also, $p\le n\le2k(k+1)$ implies $2/p\ge1/[k(k+1)]$. Consequently,
\[
\norm v_2^2
\ge\frac{1+2A(v)}{P(v)}
\ge\frac{2k+1-P(v)+A(v)}{k(k+1)}.
\]
Rearranging proves the claim.
\end{proof}

\subsection{Aligning the agents' report paths}
\label{app:absolute-paths}

Let $Z$ be a measurable absolute-loss SPIE mechanism with approximation ratio $c<\infty$. Write $r(y)=y-Z(y)$ for its random residual and $g_i(y)=\E|r_i(y)|$ for agent $i$'s truthful expected loss. For $R>0$, let $B_R=\{w\in L:\norm w_2\le R\}$ and let $W_R$ be uniform on $B_R$. For $t>0$, define
\[
Y_R(t)=W_R+\frac tn\ones,\qquad
F_R(t)=\E\norm{r(Y_R(t))}_1,\qquad
N_R(t)=\sum_{i=1}^n\Prb(r_i(Y_R(t))>0).
\]
These expectations and probabilities include both $W_R$ and the mechanism's randomness. Since every residual sums to $t$, the triangle inequality and the approximation guarantee give
\begin{equation}
t\le F_R(t)\le\sqrt{n\,\E\norm{r(Y_R(t))}_2^2}\le\sqrt c\,t.
\label{eq:absolute-l1-bounds}
\end{equation}

For a single agent, truthfulness controls how $g_i$ changes with their own report. We would like to add these changes at the same profile. The next lemma makes this step precise: averaging over $B_R$ aligns the paths up to an error that vanishes as $R$ grows.

\begin{lemma}[Averaged incentive inequality]
\label[lemma]{lem:absolute-averaging}
The function $F_R$ is locally Lipschitz on $(0,\infty)$, and there is a constant $C_{n,c}$ depending only on $n$ and $c$ such that
\begin{equation}
F_R'(t)\ge 2N_R(t)-n-\frac{C_{n,c}t}{R}
\quad\text{for almost every }t>0.
\label{eq:absolute-averaged-growth}
\end{equation}
\end{lemma}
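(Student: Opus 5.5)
The plan is to work agent by agent from the pointwise absolute-loss incentive constraints, and then average over the translation $W_R$ so that every agent's own-coordinate move can be compared with the common move $t\mapsto t+h$ along $\ones$.

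\emph{Step 1 (pointwise one-agent inequalities).} Fix $y$, an agent $i$, and $\delta>0$, and put $y'=y+\delta e_i$. Then $Z_i(y')-y_i=\delta-r_i(y')$.
\begin{itemize}
\item Truthfulness at $y$ against the report $y_i+\delta$ gives $g_i(y)\le\E|r_i(y')-\delta|$.
\item Truthfulness at $y'$ against the report $y_i$ gives $g_i(y')\le\E|r_i(y)+\delta|\le g_i(y)+\delta$.
\end{itemize}
Applying both with the roles of the two profiles swapped shows that $g_i$ is $1$-Lipschitz in coordinate $i$. For a one-sided lower bound, use the elementary inequality
\[
|r-\delta|\le|r|-\delta\,\mathrm{sgn}^+(r)+\delta\,\mathbf 1\{r\le0\}+2\delta\,\mathbf 1\{0<r<\delta\},
\]
where $\mathrm{sgn}^+(r)=\mathbf 1\{r>0\}$. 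It yields
\[
g_i(y')-g_i(y)\ge\delta\bigl(2\Prb(r_i(y')>0)-1\bigr)-2\delta\,\Prb\bigl(0<r_i(y')<\delta\bigr).
\]
The probabilities are evaluated at the later profile $y'$. I will therefore work with left difference quotients $F_R(t)-F_R(t-h)$, so that they sit at the profile $Y_R(t)$ that appears in $N_R(t)$.

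\emph{Step 2 (aligning paths by translation averaging).} Write
\[
\tfrac hn\ones=h e_i+u_i,\qquad u_i=\tfrac hn\ones-he_i\in L,\quad \norm{u_i}_2\le h.
\]
Then $Y_R(t)=(W_R+u_i)+\tfrac{t-h}{n}\ones+he_i$. Replacing $W_R+u_i$ by $W_R$ changes the law by total variation at most $\mathrm{vol}(B_R\triangle(B_R+u_i))/\mathrm{vol}(B_R)\le C_nh/R$.

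I need a pointwise bound on the integrands. By \eqref{eq:absolute-l1-bounds} applied at a single profile, every $g_i(y)$ with $\sum_jy_j=s$ satisfies $g_i(y)\le\E\norm{r(y)}_1\le\sqrt c\,s$. The same holds for the neighbouring sums $t$ and $t-h$.

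So $\E g_i(Y_R(t))-\E g_i(Y_R(t-h))$ equals $\E[g_i(\tilde Y+he_i)-g_i(\tilde Y)]$ up to an error $O(\sqrt c\,t\,h/R)$, where $\tilde Y=W_R+u_i+\tfrac{t-h}n\ones$. Similarly, the probability $\Prb(r_i(\tilde Y+he_i)>0)$ equals $\Prb(r_i(Y_R(t))>0)$ exactly, because $\tilde Y+he_i=Y_R(t)$.

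\emph{Step 3 (assembling).} Summing over $i$ and applying the Step 1 bound at $y=\tilde Y$ gives
\[
F_R(t)-F_R(t-h)\ge h\bigl(2N_R(t)-n\bigr)-2h\sum_i\Prb\bigl(0<r_i(Y_R(t))<h\bigr)-C_{n,c}\,\frac{t h}{R}.
\]
Applying the same averaging to the two-sided bound shows that $F_R$ is locally Lipschitz with constant $n+O(t/R)$, hence differentiable almost everywhere. At such $t$, divide by $h$ and let $h\downarrow0$. At the fixed distribution of $Y_R(t)$, the events $\{0<r_i<h\}$ decrease to the empty set, so their probabilities vanish. This yields \eqref{eq:absolute-averaged-growth}.

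\emph{Main obstacle.} The delicate point is the error bookkeeping in Step 2. The error constant must scale like $t/R$ rather than $(1+t)/R$; this needs the pointwise bound $g_i\le\sqrt c\,s$, which is available because exactly realizable profiles force zero loss. The Lipschitz statement near $t\to0$ must also be handled uniformly. I would first verify that the total-variation estimate multiplies only the bounded integrand $g_i$ and not the probabilities, which are already exactly aligned.
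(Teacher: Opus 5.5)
Your proposal is correct and follows essentially the same route as the paper. Both arguments use own-report $1$-Lipschitzness from truthfulness, the one-sided bound obtained by letting agent $i$ deviate from $\tilde Y=Y_R(t)-he_i$ up to $Y_R(t)$ so that the residual sign is read at the common profile $Y_R(t)$, and a ball-shift estimate driven by the approximation bound on $g_i$ to realign $\tilde Y$ with $Y_R(t-h)$. The only differences are cosmetic: you replace the paper's dominated-convergence step on $(|x|-|x-h|)/h$ with an explicit error term $2h\,\Prb(0<r_i<h)$ that vanishes by continuity from above, and you use the cruder bound $g_i\le\sqrt c\,s$ instead of $\sqrt{c/n}\,s$, which only changes $C_{n,c}$.
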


\begin{proof}
\textbf{One agent's report.}
Fix the other reports and compare reports $a$ and $b$. Truthfulness and the triangle inequality give
\[
g_i(a,y_{-i})
\le\E|Z_i(b,y_{-i})-a|
\le g_i(b,y_{-i})+|a-b|.
\]
Exchanging $a$ and $b$ shows that the truthful expected loss is $1$-Lipschitz in the agent's own report.

We also need a one-sided bound. Apply truthfulness at the profile $y-he_i$, with agent $i$ deviating upward to $y_i$, where $h>0$. The deviating prediction is drawn from $Z(y)$, so
\begin{equation}
g_i(y)-g_i(y-he_i)
\ge\E\bigl[|r_i(y)|-|r_i(y)-h|\bigr].
\label{eq:absolute-finite-difference}
\end{equation}
Dividing by $h$ and letting $h\downarrow0$ would give the contribution $2\Prb(r_i(y)>0)-1$. The issue is that $y-he_i$ depends on $i$.

\textbf{Replacing shifted profiles by common profiles.}
Put $p_i=e_i-\ones/n\in L$. The two profiles in \eqref{eq:absolute-finite-difference} satisfy
\[
y(w,t)-he_i=y(w-hp_i,t-h).
\]
Thus all agents move between the same values $t$ and $t-h$, but their translations within $L$ differ. Averaging over a large ball makes this difference small.

To quantify it, write $\operatorname{vol}_L$ for volume in the $(n-1)$-dimensional space $L$. For $a\in L$ with $\norm a_2\le R$, the symmetric difference of $B_R$ and $B_R+a$ lies in the annulus with radii $R-\norm a_2$ and $R+\norm a_2$. Hence, with $\kappa_n=2^{n-1}(n-1)$,
\[
\frac{\operatorname{vol}_L(B_R\mathbin{\triangle}(B_R+a))}
{\operatorname{vol}_L(B_R)}
\le\frac{(R+\norm a_2)^{n-1}-(R-\norm a_2)^{n-1}}{R^{n-1}}
\le\kappa_n\frac{\norm a_2}{R}.
\]
Consequently, for any bounded measurable function $f:L\to\R$,
\begin{equation}
\left|\E f(W_R+a)-\E f(W_R)\right|
\le\kappa_n\sup_{w\in L}|f(w)|\,\frac{\norm a_2}{R}
\qquad(\norm a_2\le R).
\label{eq:absolute-ball-shift}
\end{equation}
The approximation guarantee supplies the required uniform bound:
\[
0\le g_i(y(w,s))
\le\sqrt{\E r_i(y(w,s))^2}
\le\sqrt{\frac cn}\,s
\qquad(w\in L,\ s>0).
\]

Let $G_{R,i}(t)=\E_{W_R}g_i(y(W_R,t))$, so $F_R=\sum_iG_{R,i}$. Since $\norm{p_i}_2\le1$, replacing $W_R-hp_i$ by $W_R$ at parameter $t-h$ costs at most $\kappa_n\sqrt{c/n}\,th/R$ when $0<h<\min\{t,R\}$. Combining this estimate with own-report Lipschitzness yields
\[
|G_{R,i}(t)-G_{R,i}(t-h)|
\le h+\kappa_n\sqrt{\frac cn}\,\frac{th}{R}.
\]
Thus each $G_{R,i}$, and therefore $F_R$, is locally Lipschitz and absolutely continuous on compact subintervals of $(0,\infty)$.

\textbf{Summing the incentive constraints.}
Average \eqref{eq:absolute-finite-difference} at $y=y(W_R,t)$ and use \eqref{eq:absolute-ball-shift} to obtain
\[
\frac{G_{R,i}(t)-G_{R,i}(t-h)}h
\ge
\E\!\left[\frac{|r_i(Y_R(t))|-|r_i(Y_R(t))-h|}{h}\right]
-\kappa_n\sqrt{\frac cn}\,\frac tR.
\]
For every real $x$, the quotient $(|x|-|x-h|)/h$ lies in $[-1,1]$ and converges to $2\mathbf1_{\{x>0\}}-1$ as $h\downarrow0$. Dominated convergence therefore gives, wherever $G_{R,i}'(t)$ exists,
\[
G_{R,i}'(t)
\ge2\Prb(r_i(Y_R(t))>0)-1
-\kappa_n\sqrt{\frac cn}\,\frac tR.
\]
Summing over agents proves \eqref{eq:absolute-averaged-growth} with $C_{n,c}=\kappa_n\sqrt{nc}$.
\end{proof}

\subsection{Completing the lower bound and transferring it to affine regression}
\label{app:absolute-affine}

\begin{proof}[Proof of \Cref{thm:absolute-lower}]
Set $k=\lfloor(n+1)/2\rfloor$. The target ratio can be written as
\begin{equation*}
\rho_{n-1}
=\frac{n(2k+1-(n+1)/2)}{k(k+1)}
=
\begin{cases}
2-\dfrac{2}{n+1},&n\text{ odd},\\[3pt]
2-\dfrac{2}{n+2},&n\text{ even}.
\end{cases}
\label{eq:absolute-critical-ratio}
\end{equation*}
Suppose for contradiction that the mechanism has ratio $c<\rho_{n-1}$, and define $\Delta=k(k+1)(\rho_{n-1}-c)/n>0$.

At each profile $y(w,t)$, apply \Cref{lem:absolute-moment} to the normalized residual $v=r(y(w,t))/t$. Since $\sum_i v_i=1$ and $\E\norm v_2^2\le c/n$, we get
\[
\E[P(v)-A(v)]
\ge2k+1-\frac{k(k+1)c}{n}
=\frac{n+1}{2}+\Delta.
\]
After averaging in $w$, the expected positive count is $N_R(t)$ and the expected normalized negative mass is $(F_R(t)/t-1)/2$. Therefore
\begin{equation}
2N_R(t)-n\ge\frac{F_R(t)}t+2\Delta.
\label{eq:absolute-count-growth}
\end{equation}
This is the conflict between accuracy and incentives: \eqref{eq:absolute-count-growth} requires many positive residuals, while each positive residual contributes to the loss growth in \eqref{eq:absolute-averaged-growth}.

Combining these two bounds, for almost every $t>0$,
\[
\left(\frac{F_R(t)}t\right)'
=\frac{F_R'(t)}t-\frac{F_R(t)}{t^2}
\ge\frac{2\Delta}{t}-\frac{C_{n,c}}R.
\]
Integrating from $1$ to any fixed $T>1$ gives
\[
\frac{F_R(T)}T-F_R(1)
\ge2\Delta\log T-\frac{C_{n,c}(T-1)}R.
\]
The left-hand side is at most $\sqrt c-1$ by \eqref{eq:absolute-l1-bounds}. First letting $R\to\infty$ with $T$ fixed, and then letting $T\to\infty$, gives a contradiction. Hence $c\ge\rho_{n-1}=\rho_d$, proving the claimed bound.

The same lower bound holds for ordinary affine regression. Use the construction in \Cref{app:affine-lower}: choose $d$ affinely independent feature points $x_1,\ldots,x_d\in\R^{d-1}$, place one agent at each, and place $d^2$ additional agents at their centroid $\bar x$. The coefficient space is the full $\R^d$, including the intercept. Map a label vector $y\in\R^{d+1}$ of the auxiliary design $\widetilde X$ to the vertex reports $y_1,\ldots,y_d$ and the common centroid report $u=-y_{d+1}/d$. As shown there, for every coefficient vector, the affine sum of squared residuals equals $\norm{\widetilde X\beta-y}_2^2$. The optima and approximation ratios therefore agree.

To transfer absolute-loss SPIE, deviations of the first $d$ auxiliary agents correspond to unilateral deviations of the vertex agents. For the last agent, fix true auxiliary label $t$ and alternative report $t'$, and put $u=-t/d$, $u'=-t'/d$. Change the $d^2$ centroid reports from $u$ to $u'$ one at a time. If $Q_j$ is the random prediction at the centroid after $j$ changes, individual truthfulness for the next agent, whose true value is $u$, gives
\begin{equation*}
\E|Q_{j-1}-u|\le\E|Q_j-u|\quad(1\le j\le d^2),
\qquad |-dQ_j-t|=d|Q_j-u|.
\label{eq:absolute-clone-transfer}
\end{equation*}
All centroid agents receive the same prediction. Hence these inequalities chain, and multiplication by $d$ gives the incentive inequality for the last auxiliary agent. Any affine mechanism with ratio below $\rho_d$ would therefore induce an absolute-loss SPIE mechanism with that ratio on $\widetilde X$, whose column space is $\ones^\perp$, contradicting the bound already proved.
\end{proof}

\paragraph{Where equality comes from.}
On the balanced design, $H=I-\ones\ones^\top/n$. Conditional on interpolating a set $B$ of size $b$, the least-squares residual has value zero on $B$ and $\tau/(n-b)$ elsewhere, where $\tau=\sum_i y_i$. For $\tau>0$, its normalized version therefore has $A=0$, $P=n-b$, and squared norm $1/(n-b)$. When $d$ is even, our mechanism uses $P=(n+1)/2$; when $d$ is odd, it mixes $P=n/2$ and $P=n/2+1$ equally. Thus the geometric bound in \Cref{lem:absolute-moment} is tight on this design. Moreover, $F_R(t)=t$ and $2N_R(t)-n=1$, so the incentive-growth calculation is also tight once the boundary error vanishes. This explains the exact parity-dependent factor.

\section{The lower bound for squared individual loss}
\label{app:lower}

We first prove the squared-loss lower bound for the balanced prediction space
and then transfer it to affine regression with \(d\) free coefficients,
including the intercept. Randomized mechanisms are assumed measurable.
The proof starts with an envelope formula for quadratic loss.

\begin{lemma}[Quadratic envelope formula]
\label[lemma]{lemma:envelope}
Fix the reports of all agents other than agent \(i\).  For a report
\(r\in\R\), let \(Z(r)\) be the random prediction received by agent \(i\), and
suppose that \(\E[Z(r)^2]<\infty\) for every \(r\in\R\). Define
\(\mu(r):=\E[Z(r)]\) and \(q(r):=\E[Z(r)^2]\).

If the mechanism is strategyproof in expectation for quadratic individual
loss, then \(\mu\) is nondecreasing.  Moreover, if \(Z(a)=a\) almost surely for
some \(a\in\R\), then, for every \(b\ge a\),

\begin{equation*}
\label{eq:lower-envelope}
    \E[(Z(b)-b)^2]
    =2\int_a^b \bigl(s-\mu(s)\bigr)\,ds.
\end{equation*}
\end{lemma}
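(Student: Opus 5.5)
The argument is the quadratic analogue of the Myerson-style envelope argument. Write agent $i$'s expected loss when the true label is $y$ and the report is $r$ as
\[
U(r,y)=\E[(Z(r)-y)^2]=q(r)-2y\mu(r)+y^2 .
\]
This is an affine function of $y$ plus $y^2$. SPIE says that $V(y):=U(y,y)=\min_r U(r,y)$. Up to the common term $y^2$, $V$ is therefore a pointwise minimum of the affine functions $y\mapsto q(r)-2y\mu(r)$, one for each $r$, so $V(y)-y^2$ is concave.

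\textbf{Step 1: monotonicity.} I will write the two incentive inequalities: agent of type $a$ does not mimic $b$, and type $b$ does not mimic $a$. Each reads
\[
q(a)-2a\mu(a)\le q(b)-2a\mu(b),
\]
together with the same inequality with $a$ and $b$ swapped. Adding the two gives $2(b-a)(\mu(b)-\mu(a))\ge0$, so $\mu$ is nondecreasing. In particular $\mu$ is locally bounded and Riemann integrable, so the integral in the statement makes sense.

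\textbf{Step 2: the envelope derivative.} The same pair of inequalities sandwiches the increment of $\phi(y):=V(y)-y^2=q(y)-2y\mu(y)$. The first inequality gives
\[
\phi(b)-\phi(a)\le -2(b-a)\mu(b),
\]
and the second gives $\phi(b)-\phi(a)\ge -2(b-a)\mu(a)$. Consequently $\phi$ is locally Lipschitz, using the local bound on $\mu$. At every continuity point of $\mu$, which is all but countably many points, we get $\phi'(y)=-2\mu(y)$. Hence $\phi(b)-\phi(a)=-2\int_a^b\mu(s)\,ds$, and therefore
\[
V(b)-V(a)=b^2-a^2-2\int_a^b\mu(s)\,ds=2\int_a^b\bigl(s-\mu(s)\bigr)\,ds .
\]
I would actually prove the integral identity directly with Riemann sums over a partition of $[a,b]$, using the two-sided sandwich on each subinterval. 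Monotone $\mu$ makes the upper and lower sums converge to the same integral. This avoids any differentiability fuss.

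\textbf{Step 3: boundary condition.} If $Z(a)=a$ almost surely, then $V(a)=0$. This yields the displayed formula for every $b\ge a$; the argument works equally for $b<a$ with the orientation reversed. Finiteness of $q$ is used throughout to keep all expressions finite.

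The main obstacle is only the regularity step, namely justifying the fundamental theorem of calculus without assuming differentiability. The monotone sandwich in Step 2 handles it, so I expect the proof to be short.
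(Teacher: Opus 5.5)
Your argument is essentially the paper's envelope argument. The paper works with $\Phi(t)=2t\mu(t)-q(t)=-\phi(t)$, writes it as a supremum of affine functions with subgradient $2\mu(r)$ at $r$, and then invokes monotonicity of subgradients and the absolute continuity and a.e.\ differentiability of convex functions. Your direct addition of the two incentive inequalities, together with the Darboux-sum evaluation of $\phi(b)-\phi(a)$, is a self-contained version of the same steps that avoids citing those convex-analysis facts.

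There is one correction: the two bounds in Step 2 are reversed. Type $a$ not mimicking $b$ gives $\phi(a)\le\phi(b)+2(b-a)\mu(b)$, that is, $\phi(b)-\phi(a)\ge-2(b-a)\mu(b)$. Type $b$ not mimicking $a$ gives $\phi(b)-\phi(a)\le-2(b-a)\mu(a)$. As you wrote them, the two bounds together would force $\mu(b)\le\mu(a)$, contradicting Step 1 unless $\mu$ is constant. With the correct orientation, summing over a partition of $[a,b]$ places $\phi(b)-\phi(a)$ between $-2$ times the upper and $-2$ times the lower Darboux sums of the nondecreasing function $\mu$. The rest of your proof then goes through unchanged.
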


\begin{proof}
For a true label \(t\) and an alternative report \(r\), strategyproofness in
expectation gives
\begin{equation}
\label{eq:lower-ic}
    q(t)-2t\mu(t)\le q(r)-2t\mu(r).
\end{equation}
Let
\(\Phi(t):=2t\mu(t)-q(t)\).  Equation~\eqref{eq:lower-ic}, together with the
fact that reporting \(r=t\) is feasible, shows that
\[
    \Phi(t)=\sup_{r\in\R}\{2t\mu(r)-q(r)\}.
\]
Thus \(\Phi\) is a finite convex function.  In addition, for all \(r,t\in\R\),
\begin{equation}
\label{eq:lower-subgradient}
    \Phi(t)\ge \Phi(r)+2(t-r)\mu(r),
\end{equation}

so \(2\mu(r)\) is a subgradient of \(\Phi\) at \(r\).  Monotonicity of
one-dimensional subgradients implies that \(\mu\) is nondecreasing.

A finite convex function on \(\R\) is absolutely continuous on every compact
interval and differentiable almost everywhere.  At every point \(s\) where
\(\Phi\) is differentiable,~\eqref{eq:lower-subgradient} implies
\(\Phi'(s)=2\mu(s)\). Integrating gives
\[
    \Phi(b)-\Phi(a)=2\int_a^b\mu(s)\,ds.
\]
Exactness at \(a\) gives \(\mu(a)=a\), \(q(a)=a^2\), and
\(\Phi(a)=a^2\).  Finally,
\(\E[(Z(b)-b)^2]=b^2-\Phi(b)\), so
\[
    \E[(Z(b)-b)^2]
    =b^2-a^2-2\int_a^b\mu(s)\,ds
    =2\int_a^b(s-\mu(s))\,ds. \qedhere
\]
\end{proof}

We next prove the lower bound for a balanced \(d\)-coefficient prediction
space.  This is the auxiliary instance used in the reduction to affine
regression.

\begin{theorem}[Balanced prediction-space lower bound]
\label{thm:lower-balanced}
Let \(n=d+1\), and let \(X\in\R^{n\times d}\) have full column rank with
\[
    \operatorname{col}(X)
    =L:=\left\{z\in\R^n:\sum_{i=1}^n z_i=0\right\}.
\]
Every randomized mechanism whose outputs lie in \(L\) and that is
strategyproof in expectation for quadratic individual loss has squared-loss
approximation ratio at least \(d+1\).
\end{theorem}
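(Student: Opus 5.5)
We follow the sketch of \Cref{thm:lower} and mirror the averaging machinery of \Cref{lem:absolute-averaging}. Suppose the mechanism has finite ratio $c$, and parametrize profiles as $y(w,t)=w+(t/n)\ones$ with $w\in L$ and $t>0$. Then $\OPT(y(w,t))=t^2/n$, and every residual $y-Z(y)$ has coordinate sum exactly $t$, because $Z(y)\in L$. In particular, at $t=0$ the profile lies in $L$ and must be fitted exactly almost surely.

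Fix agent $i$ and a profile $y(w,t)$. Lowering agent $i$'s report by $t$ gives $y-te_i=y(w-tp_i,0)$ with $p_i=e_i-\ones/n\in L$. This profile is realizable, so agent $i$ is fitted exactly there. Finite second moments follow from the approximation bound. \Cref{lemma:envelope} with $a=y_i-t$ and $b=y_i$ then yields
\[
C_i(y(w,t))=2\int_0^t r_i\bigl(y(w-(t-s)p_i,s)\bigr)\,ds,
\]
where $r_i(v)=v_i-\E Z_i(v)$ and $C_i(v)=\E[(Z_i(v)-v_i)^2]$. This is the identity that replaces the one-sided inequality used in the absolute-loss case.

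Next we average over $W_R$ uniform on $B_R\subset L$. The translation inside the integrand is bounded: $\norm{(t-s)p_i}_2\le t$. We therefore apply the ball-shift estimate \eqref{eq:absolute-ball-shift} to the function $w\mapsto r_i(y(w,s))$. This function is bounded uniformly in $w$, since $|r_i|\le\sqrt{\E r_i^2}\le\sqrt{c/n}\,s$ by Jensen and the approximation guarantee; measurability is assumed. The estimate shows
\[
\Bigl|\E_{W_R}r_i\bigl(y(W_R-(t-s)p_i,s)\bigr)-\E_{W_R}r_i\bigl(y(W_R,s)\bigr)\Bigr|\le\kappa_n\sqrt{c/n}\,st/R .
\]
We exchange expectation and integral by Fubini, justified by the same bound, and sum over $i$. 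Using $\sum_i r_i(y(w,s))=s$ pointwise gives
\[
\E_{W_R}C\bigl(Z(y(W_R,t));y(W_R,t)\bigr)\ge 2\int_0^t s\,ds-O\!\left(\frac{n\kappa_n\sqrt{c}\,t^3}{R}\right)=t^2-O(t^3/R).
\]
The left side is at most $c\,t^2/n$ by the approximation guarantee applied at every $w$. Letting $R\to\infty$ with $t$ fixed yields $c\ge n=d+1$.

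The main obstacle is the alignment step. Each agent's envelope path runs through profiles shifted by $-(t-s)p_i$, which depends on $i$. The feasibility identity $\sum_i r_i=s$ only holds at a common profile, so the paths must be brought together before summing. The ball-averaging is what resolves this, and it needs three ingredients. First, a uniform-in-$w$ bound on $r_i$, which we get from the ratio $c$. Second, joint measurability of $(w,s)\mapsto r_i(y(w,s))$, so that Fubini applies. Third, the observation that the shifts have norm at most $t$, which is fixed while $R\to\infty$. Once these hold, the rest is the exact quadratic envelope identity. Unlike in \Cref{thm:absolute-lower}, no growth argument in $t$ is needed.
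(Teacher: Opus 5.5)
Your proposal is correct and follows essentially the same route as the paper: the quadratic envelope identity (\Cref{lemma:envelope}) along each agent's path starting from the realizable profile $y-te_i$, then ball-averaging over $L$ to absorb the agent-dependent shifts $-(t-s)p_i$, and finally the pointwise identity $\sum_i r_i(y(w,s))=s$, which gives $t^2$ against the upper bound $c\,t^2/n$. The differences are only cosmetic: your shift bound $st/R$ is slightly looser than the paper's $s(t-s)/R$, which does not matter as $R\to\infty$, and you state explicitly the finite-second-moment and measurability hypotheses that the paper leaves implicit.
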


\begin{proof}
Suppose that such a mechanism has a finite approximation ratio \(c\).  For a
report vector \(y\in\R^n\), let \(Z(y)\in L\) be its random prediction vector,
and write \(\OPT(y):=\min_{z\in L}\norm{z-y}_2^2\).

If \(\sigma(y):=\sum_i y_i\), orthogonal projection onto \(L\) gives
\(\OPT(y)=\sigma(y)^2/n\).  In particular, if \(y\in L\), then finite
approximability forces
\(\E[\norm{Z(y)-y}_2^2]=0\), and therefore \(Z(y)=y\) almost surely.

For each agent \(i\), define \(C_i(y):=\E[(Z_i(y)-y_i)^2]\) and
\(\mu_i(y):=\E[Z_i(y)]\).

Fix \(\tau>0\) and a report vector \(y\) with \(\sigma(y)=\tau\).  If agent
\(i\) changes only their report from \(y_i\) to \(y_i-\tau\), the resulting
vector \(y-\tau e_i\) belongs to \(L\), so the mechanism is exact there.
Applying Lemma~\ref{lemma:envelope} while holding \(y_{-i}\) fixed, and then
putting \(s=r-(y_i-\tau)\), gives

\begin{equation}
\label{eq:lower-path-envelope}
    C_i(y)=2\int_0^\tau
    \left(y_i-\tau+s
    -\mu_i\bigl(y-(\tau-s)e_i\bigr)\right)ds.
\end{equation}

We average~\eqref{eq:lower-path-envelope} over large balls in \(L\).  For
\(R>0\), let \(K_R:=\{w\in L:\norm{w}_2\le R\}\),
and let \(W_R\) be uniform on \(K_R\) with respect to the \(d\)-dimensional
Lebesgue measure on \(L\).  Set
\(Y_R:=W_R+(\tau/n)\mathbf 1\); every realization of \(Y_R\) has coordinate
sum \(\tau\) and optimum \(\tau^2/n\).  For \(w\in L\) and
\(s\in[0,\tau]\), define
\[
    \Delta_i(w,s):=
    w_i+\frac{s}{n}
    -\mu_i\left(w+\frac{s}{n}\mathbf 1\right),
    \qquad
    p_i:=e_i-\frac1n\mathbf 1\in L.
\]
The profile on the path in~\eqref{eq:lower-path-envelope} decomposes as
\[
    Y_R-(\tau-s)e_i
    =\bigl(W_R-(\tau-s)p_i\bigr)+\frac{s}{n}\mathbf 1.
\]
Consequently, the integrand in~\eqref{eq:lower-path-envelope} is
\(\Delta_i(W_R-(\tau-s)p_i,s)\).

We now justify replacing the translated ball by the original ball.  At a
profile \(v=w+(s/n)\mathbf 1\), where \(w\in L\), we have
\(\OPT(v)=s^2/n\).  The approximation guarantee and Jensen's inequality give

\begin{equation}
\label{eq:lower-delta-bound}
    |\Delta_i(w,s)|
    \le \sqrt{\E[(Z_i(v)-v_i)^2]}
    \le \sqrt{\frac cn}\,s.
\end{equation}

Apply the ball-shift estimate \eqref{eq:absolute-ball-shift} to
\(f(w)=\Delta_i(w,s)\) with shift \(-(\tau-s)p_i\). By
\eqref{eq:lower-delta-bound} and \(\norm{p_i}_2\le1\), for \(R>\tau\)
and \(s\in[0,\tau]\),
\[
\left|
    \E[\Delta_i(W_R-(\tau-s)p_i,s)]
    -\E[\Delta_i(W_R,s)]
\right|
\le \kappa_n\sqrt{\frac cn}\,\frac{s(\tau-s)}{R}.
\]
By Fubini's theorem and \eqref{eq:lower-path-envelope},

\begin{equation}
\label{eq:lower-averaged-envelope}
    \E[C_i(Y_R)]
    =2\int_0^\tau\E[\Delta_i(W_R,s)]\,ds+\varepsilon_{i,R},
    \qquad
    |\varepsilon_{i,R}|
    \le \kappa_n\sqrt{\frac cn}\,\frac{\tau^3}{3R}.
\end{equation}

For fixed \(w\in L\) and \(s\in[0,\tau]\), let
\(v=w+(s/n)\mathbf 1\).  Every realization \(Z(v)\) belongs to \(L\), so
\(\sum_i\mu_i(v)=0\).  Since \(\sum_i v_i=s\), we therefore have the pointwise
formula
\[
    \sum_{i=1}^n\Delta_i(w,s)=s.
\]
Summing~\eqref{eq:lower-averaged-envelope} over the agents and letting
\(R\to\infty\) now yields
\[
    \lim_{R\to\infty}
    \E\left[\sum_{i=1}^n C_i(Y_R)\right]
    =2\int_0^\tau s\,ds=\tau^2.
\]
On the other hand, pointwise \(c\)-approximability and
\(\OPT(Y_R)=\tau^2/n\) imply
\[
    \E\left[\sum_{i=1}^n C_i(Y_R)\right]
    \le c\frac{\tau^2}{n}
\]
for every \(R\).  Taking \(R\to\infty\) gives \(c\ge n=d+1\).
\end{proof}

\subsection{Reduction to ordinary affine regression}
\label{app:affine-lower}
We finally transfer the balanced instance to ordinary affine regression. 

\begin{theorem}[Affine regression with an intercept]
\label{thm:lower-affine}
For every \(d\ge1\), there is an unweighted public design for affine regression
on \(\R^{d-1}\), with hypothesis class
\[
    f_\beta(x)=\beta_0+\sum_{j=1}^{d-1}\beta_jx_j,
    \qquad \beta\in\R^d,
\]
such that every randomized mechanism that is strategyproof in expectation for
quadratic individual loss has squared-loss approximation ratio at least \(d+1\).
\end{theorem}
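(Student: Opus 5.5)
The plan is to reduce to \Cref{thm:lower-balanced} via the construction already sketched in \Cref{app:absolute-affine}. First choose $d$ affinely independent points $x_1,\ldots,x_d\in\R^{d-1}$ with centroid $\bar x$. Place one agent at each $x_j$ and $m=d^2$ agents at $\bar x$, so the full design has $d+d^2$ rows $(x_j,1)$ and $(\bar x,1)$. This design has full column rank. Moreover, the map $\beta\mapsto(f_\beta(x_1),\ldots,f_\beta(x_d))$ is a linear bijection $\R^d\to\R^d$, and $f_\beta(\bar x)=\frac1d\sum_j f_\beta(x_j)$ by affineness.

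The next step is the cost identity. Write $z_j=f_\beta(x_j)$, vertex reports $y_1,\ldots,y_d$, and centroid reports all equal to $u$. Then the affine cost is
\[
\sum_j(z_j-y_j)^2+d^2\Bigl(\tfrac1d\textstyle\sum_j z_j-u\Bigr)^2 .
\]
Setting $z_{d+1}=-\sum_j z_j$ and $y_{d+1}=-du$, the second term becomes $(z_{d+1}-y_{d+1})^2$. So the cost equals $\norm{\tilde z-\tilde y}_2^2$ with $\tilde z\in L=\ones^\perp\subseteq\R^{d+1}$, and every $\tilde z\in L$ arises from some $\beta$. Hence, for any auxiliary profile $\tilde y\in\R^{d+1}$, feeding the affine mechanism the reports $(y_1,\ldots,y_d,u,\ldots,u)$ with $u=-\tilde y_{d+1}/d$ and mapping its output to $\tilde Z\in L$ gives the same squared cost. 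The optima agree too. Therefore a $c$-approximate affine mechanism induces a $c$-approximate mechanism on the balanced instance.

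Next I would transfer squared-loss SPIE. For auxiliary agents $1,\ldots,d$, a deviation is a unilateral deviation of the corresponding vertex agent, and their predictions coincide, so SPIE carries over directly. The main obstacle is the last auxiliary agent, who controls $d^2$ centroid reports at once, while SPIE only rules out unilateral deviations. I would handle this with the hybrid chain from \Cref{app:absolute-affine}. Fix true $t$ and alternative $t'$, put $u=-t/d$ and $u'=-t'/d$, and switch centroid reports from $u$ to $u'$ one at a time. Let $Q_j$ be the common centroid prediction after $j$ switches. At step $j$, the agent being switched has true value $u$, and the others either sit at $u$ or at $u'$. Applying SPIE for that agent at the profile with $j-1$ switches, deviating to $u'$, gives $\E(Q_{j-1}-u)^2\le\E(Q_j-u)^2$. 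Here I use that the truthful agent's true value is $u$ regardless of the others' reports, which SPIE allows since it holds for every profile of others' reports. Chaining these inequalities gives $\E(Q_0-u)^2\le\E(Q_{d^2}-u)^2$. Since $\tilde Z_{d+1}-t=-d(Q-u)$, multiplying by $d^2$ yields the incentive inequality for auxiliary agent $d+1$.

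Having built an SPIE mechanism with ratio $c$ on the balanced instance, \Cref{thm:lower-balanced} gives $c\ge d+1$. Measurability passes through the linear maps. The case $d=1$ is constant fitting with one vertex agent and one centroid agent at the same point, which fits the same scheme.
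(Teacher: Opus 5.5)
Your proposal is correct and follows the paper's proof essentially step for step. It uses the same one-agent-per-vertex plus $d^2$-centroid design, the same report map $u=-\tilde y_{d+1}/d$ with the matching cost identity, and the same one-at-a-time hybrid chain over the co-located agents to transfer SPIE to the last auxiliary agent before invoking \Cref{thm:lower-balanced}. The only difference is presentational: you work with prediction vectors in $\ones^\perp$, where the paper writes the auxiliary design matrix $\widetilde X$ explicitly.
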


\begin{proof}
Choose \(d\) affinely independent points
\(x_1,\ldots,x_d\in\R^{d-1}\), and let
\(\bar x:=d^{-1}\sum_{i=1}^d x_i\).  Write
\(a_i^{\mathsf T}:=(1,x_i^{\mathsf T})\), let \(A\in\R^{d\times d}\) have
rows \(a_i^{\mathsf T}\), and put
\(\bar a:=d^{-1}\sum_i a_i=(1,\bar x^{\mathsf T})^{\mathsf T}\).
Affine independence makes \(A\) invertible.
For \(d=1\), this construction is constant regression.

Consider first the auxiliary \((d+1)\)-agent homogeneous design
\[
    \widetilde X
    :=\begin{pmatrix}A\\-d\bar a^{\mathsf T}\end{pmatrix}
    =\begin{pmatrix}I_d\\-\mathbf 1^{\mathsf T}\end{pmatrix}A.
\]
It has \(d\) free coefficients and
\(\operatorname{col}(\widetilde X)=\mathbf 1^\perp\).  Hence
Theorem~\ref{thm:lower-balanced} applies to this design.

Now form an affine instance with one agent at each \(x_i\), \(i\in[d]\), and
\(d^2\) additional agents at \(\bar x\).  Suppose, toward a contradiction, that
there is a mechanism \(M\), strategyproof in expectation, with approximation
ratio \(c<d+1\) for this fixed affine design.  We use \(M\) to construct a
mechanism \(\widetilde M\) for the auxiliary design.

Given auxiliary reports \(y=(y_1,\ldots,y_{d+1})\), give the affine agents at
\(x_i\) the reports \(y_i\), and give every one of the \(d^2\) agents at
\(\bar x\) the common report \(u:=-y_{d+1}/d\).

Run \(M\) on this affine report profile and let \(\widetilde M\) return the same
coefficient vector \(\beta\in\R^d\).  For every deterministic value of
\(\beta\), the auxiliary and affine sums of squared residuals agree:

\begin{align*}
    \norm{\widetilde X\beta-y}_2^2
    &=\sum_{i=1}^d(a_i^{\mathsf T}\beta-y_i)^2
      +(-d\bar a^{\mathsf T}\beta-y_{d+1})^2\\
    &=\sum_{i=1}^d(f_\beta(x_i)-y_i)^2
      +d^2\left(f_\beta(\bar x)+\frac{y_{d+1}}d\right)^2.
\end{align*}

The second line is exactly the affine social cost after the transformation of
reports.  Minimizing over \(\beta\) shows that the optima also agree.  Therefore
\(\widetilde M\) inherits the approximation ratio \(c\).

It remains to verify strategyproofness in expectation.  For any of the first
\(d\) auxiliary agents, a unilateral change of report changes only the report
of the corresponding affine agent, and both agents have the same prediction
and quadratic loss.  Strategyproofness of \(M\) therefore gives the desired
incentive inequality.

For auxiliary agent \(d+1\), fix a true label \(t\) and an alternative report
\(t'\), and set \(u:=-t/d\) and \(u':=-t'/d\).  Enumerate the \(m:=d^2\)
co-located agents.  For \(j=0,\ldots,m\), let \(\rho^j\) be the affine report
profile in which the first \(j\) co-located agents report \(u'\), the remaining
\(m-j\) report \(u\), and all other reports are fixed.  Let \(P_j\) denote the
random prediction at \(\bar x\) when \(M\) is run on \(\rho^j\).

Applying strategyproofness to agent \(j\), with true label \(u\) and all other
reports fixed, gives
\[
    \E[(P_{j-1}-u)^2]\le\E[(P_j-u)^2].
\]
All co-located agents receive the same prediction, so these inequalities chain
from \(j=1\) to \(m\).  Multiplying the resulting inequality by \(d^2\) and
using
\[
    (-dP_j-t)^2=d^2(P_j-u)^2
\]
shows precisely that auxiliary agent \(d+1\) cannot lower their expected
quadratic loss by changing their report from \(t\) to \(t'\).  Thus
\(\widetilde M\) is strategyproof in expectation.  Theorem~\ref{thm:lower-balanced}
then gives \(c\ge d+1\), a contradiction.
\end{proof}

\section{Randomization for general power losses}
\label{app:powers}

We prove \Cref{thm:powers} for the complete hypothesis class \(\Theta=\R^d\).
The proof uses two independent full-size volume-sampled interpolants. We first
establish the moments in \eqref{eq:two-fit-moments}, then verify the incentives and squared-error
guarantees for the two lotteries in \Cref{sec:powers}.

\subsection{Moments of the two interpolating fits}

For a set \(B\subseteq[n]\) of size \(d\), let \(X_B\) be the corresponding
square row submatrix.  Full-size volume sampling draws \(B\) according to
\begin{equation*}
  \Prb(B)=\frac{\det(X_B)^2}{\det(X^\top X)}.
  \label{powers:eq-volume-law}
\end{equation*}
The probabilities sum to one by Cauchy--Binet.  Singular sets have probability
zero; on every supported set, define the interpolating prediction vector
\(U(y)=XX_B^{-1}y_B\).  Define \(V(y)\) from an independent set \(C\) drawn by
the same law. Since $\det(H_B)=\det(X_B)^2/\det(X^\top X)$ when $|B|=d$, this is exactly \eqref{eq:partial-interpolation} with $b=d$. The interpolation constraints then determine the fit uniquely, so $U=Z^B(y)$.

The following lemma is due to \citet{derezinski2017}. We give a direct proof.

\begin{lemma}[Moments of a volume-sampled interpolant]
\label[lemma]{powers:volume-moments}
Let \(z^\star=Hy\) and \(r=(I-H)y\).  Then
\[
  \E U=\E V=z^\star,
  \qquad
  \sigma^2:=\E\norm{U-z^\star}_2^2
  =\E\norm{V-z^\star}_2^2
  \le d\norm r_2^2=d\OPT(y).
\]
\end{lemma}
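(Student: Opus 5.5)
The second-moment bound comes essentially for free from earlier results, so most of the effort goes into unbiasedness. By \Cref{lem:partial-efficiency} at $b=d$, together with \eqref{eq:constrained-loss-main}, every supported $B$ satisfies
\[
\norm{U-y}_2^2=\norm{r}_2^2+(r_B)^\top H_B^{-1}r_B,
\]
and the weighted average of the correction term is at most $d\norm{r}_2^2$. Because $U-z^\star\in L$ and $y-z^\star=r\perp L$, Pythagoras gives $\norm{U-y}_2^2=\norm{U-z^\star}_2^2+\norm r_2^2$. Hence $\sigma^2=\E\norm{U-z^\star}_2^2\le d\norm r_2^2$. The same holds for $V$, since $C$ has the same law.

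For unbiasedness I use $U=z^\star+q^B$ with $q^B=H_{:,B}H_B^{-1}r_B$, valid by \eqref{eq:constrained-loss-main}. It suffices to show $\E_B[q^B]=0$. Since $q^B=XX_B^{-1}r_B$ when $|B|=d$, I will show instead that $\E_B[X_B^{-1}r_B]=0$.

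To do so, I write $X_B^{-1}=\operatorname{adj}(X_B)/\det(X_B)$ and use the weights $\det(X_B)^2/\det(X^\top X)$. This gives
\[
\E_B[X_B^{-1}r_B]=\det(X^\top X)^{-1}\sum_{|B|=d}\det(X_B)\operatorname{adj}(X_B)r_B .
\]
By Cramer's rule, coordinate $j$ of $\det(X_B)X_B^{-1}r_B$ is $\det(X_B^{(j)})$, where $X^{(j)}$ is $X$ with column $j$ replaced by $r$. So coordinate $j$ of the sum is $\sum_B\det(X_B)\det(X^{(j)}_B)$. Cauchy--Binet identifies this sum as $\det(X^\top X^{(j)})$. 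The matrix $X^\top X^{(j)}$ has column $j$ equal to $X^\top r=0$, because $r\perp\col(X)$. Its determinant therefore vanishes, and $\E U=z^\star$; the same argument applies to $V$.

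The main obstacle is this Cramer-plus-Cauchy--Binet identity, which needs care in two places. First, singular $B$ carry zero weight, so extending the sum over all $B$ is harmless because each such term has the factor $\det(X_B)=0$. Second, the sign conventions in the adjugate must match those of the Cauchy--Binet expansion. If this becomes messy, I would fall back on linearity: it then suffices to check $\E[XX_B^{-1}e_B]$-type identities per coordinate. Alternatively, I would differentiate the sum $\sum_B\det(X_B+tR_B)^2$ at $t=0$, where $R$ has $r$ in column $j$.
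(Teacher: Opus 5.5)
Your proposal is correct and follows the paper's proof essentially line for line. The second-moment bound comes from \Cref{lem:partial-efficiency} at $b=d$ plus Pythagoras, and unbiasedness comes from $U-Hy=XX_B^{-1}r_B$, Cramer's rule, and Cauchy--Binet, which give $\det(X^\top X^{(j)})=0$ because its $j$-th column is $X^\top r=0$. The sign and singular-set concerns you flag are harmless: Cramer's rule yields the minors directly, and singular $B$ contribute zero. The fallback arguments are therefore unnecessary.
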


\begin{proof}
For each sampled set \(B\), put \(u_B=X_B^{-1}r_B\), so \(U-Hy=Xu_B\).
For each \(j\in[d]\), let \(Y^{(j)}\) be the matrix obtained from \(X\) by
replacing column \(j\) by \(r\).  Cramer's rule gives
\(\det(X_B)(u_B)_j=\det(Y^{(j)}_B)\). Cauchy--Binet therefore gives
\begin{align*}
  \E (u_B)_j
  &=\frac{1}{\det(X^\top X)}
    \sum_{|B|=d}\det(X_B)\det(Y^{(j)}_B)\\
  &=\frac{\det(X^\top Y^{(j)})}{\det(X^\top X)}=0.
  \label{powers:eq-unbiased-cramer}
\end{align*}
Singular sets contribute zero to the sum, and the final determinant vanishes
because its \(j\)-th column is \(X^\top r=0\). Hence \(\E U=Hy\).

For the second moment, $U-Hy\in L$ and $r=y-Hy\perp L$, so
\[
  \E\norm{U-y}_2^2
  =\norm r_2^2+\E\norm{U-Hy}_2^2.
\]
Applying \Cref{lem:partial-efficiency} with $b=d$ bounds the left-hand side by
$(d+1)\norm r_2^2$, giving
$\E\norm{U-Hy}_2^2\le d\norm r_2^2$. The independent copy $V$ has the same moments.
\end{proof}

\subsection{The convex range \texorpdfstring{$p\ge1$}{p >= 1}}

Let \(\kappa=1+\sqrt2\), and let \(\lambda_p\) be as in
\Cref{thm:powers}.  Put
\[
  a=\frac{\sqrt2-1}{2}=\frac{1}{2\kappa},
  \qquad
  b=\frac{\sqrt2+1}{2}=\frac{\kappa}{2}.
\]
The three-outcome lottery in \eqref{eq:power-lottery} can be written
\(Z=(1-T)U+TV\), where \(T=1/2\) with probability \(1-\lambda_p\), and
\(T=-a\) and \(T=b\) each with probability \(\lambda_p/2\).
Draw \(T\) independently of the sampled sets and reports. Notice that
\(b=1+a\), so \(T\) and \(1-T\) have the same distribution.

For \(x\ne0\), define
\(\psi_p(x)=\operatorname{sgn}(x)|x|^{p-1}\).  Direct substitution gives the
incentive balance
\begin{align}
 \E[(1-T)\psi_p(T)]
 &= (1-\lambda_p)2^{-p}
    -\frac{\lambda_p}{2}
       \bigl(ba^{p-1}+ab^{p-1}\bigr)=0,
 \label{powers:eq-convex-balance}
\end{align}
because
\[
  2^{p-1}\bigl(ba^{p-1}+ab^{p-1}\bigr)
  =\frac{\kappa^{p-2}+\kappa^{2-p}}{2}
  =\frac{1-\lambda_p}{\lambda_p}.
\]

\begin{lemma}[Incentives for the convex-power lottery]
\label[lemma]{powers:convex-incentives}
For every \(p\ge1\), the corresponding lottery in
\eqref{eq:power-lottery} is SPIE for individual loss \(|z_i-y_i|^p\).
\end{lemma}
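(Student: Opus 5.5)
Fix an agent $i$, the other reports $y_{-i}$, the true label $y_i$, and a deviation $y_i\mapsto y_i+\delta$. The sets $B,C$ and the lottery variable $T$ are drawn independently of the reports. So it suffices to show that, conditional on every realization of $(B,C)$, the expected loss over $T$ is minimized at $\delta=0$; averaging over $(B,C)$ then gives SPIE. We split into four cases according to whether $i\in B$ and whether $i\in C$.

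\emph{Agent in neither sample.} Here $U=XX_B^{-1}y_B$ and $V=XX_C^{-1}y_C$ do not depend on $y_i$, so the deviation changes nothing.

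\emph{Agent in both samples.} Then $U_i=V_i=$ report, so $Z_i=(1-T)U_i+TV_i$ equals the report for every $T$. Truthful loss is zero, which cannot be improved.

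\emph{Agent in exactly one sample.} By the symmetry $T\overset{d}{=}1-T$ noted above, the case $i\in C\setminus B$ reduces to $i\in B\setminus C$. So take $i\in B\setminus C$. Then $U_i=y_i+\delta$, because $U$ interpolates the reports on $B$. Also $V_i$ is independent of $y_i$; write $V_i=y_i+s$ with $s$ fixed. Hence
\[
Z_i-y_i=(1-T)\delta+Ts,\qquad F(\delta):=\E_T\,|(1-T)\delta+Ts|^p .
\]
For $p\ge1$, each map $\delta\mapsto|(1-T)\delta+Ts|^p$ is convex, so $F$ is convex. It therefore suffices to show that $0$ lies in the subdifferential of $F$ at $\delta=0$.

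If $s=0$, then $F(0)=0$, which is the minimum.

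If $s\ne0$, every support value of $T$ is nonzero ($1/2$, $-a$, $b$). So each term is differentiable at $\delta=0$ when $p>1$, with derivative $p(1-T)\psi_p(Ts)$. Using $\psi_p(Ts)=\psi_p(T)\psi_p(s)$, this gives
\[
F'(0)=p\,\psi_p(s)\,\E[(1-T)\psi_p(T)]=0
\]
by the balance identity \eqref{powers:eq-convex-balance}. A convex function with zero derivative at $0$ is minimized there.

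\emph{The case $p=1$.} Here $\psi_1=\operatorname{sgn}$. Each term $|(1-T)\delta+Ts|$ is still differentiable at $\delta=0$, because $Ts\ne0$, so the same computation applies. Checking the identity at $p=1$: $\lambda_1=2/(2+\kappa^{-1}+\kappa)=2/(2+2\sqrt2)$, and \eqref{powers:eq-convex-balance} already covers this value.

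\emph{Main obstacle.} The delicate step is verifying that the distribution of $V_i$ given $i\in B\setminus C$ is genuinely unaffected by $y_i$. This holds because $C$ is drawn independently of $B$ from report-independent probabilities, and $V$ depends only on $y_C$. Beyond that, one must confirm that convexity plus first-order stationarity suffices even though $F$ may fail to be differentiable at points other than $0$. It does: for a convex function, $0\in\partial F(0)$ is a global optimality condition, and we only need differentiability at $0$ itself, which the nonzero support of $T$ guarantees.
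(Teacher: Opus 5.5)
Your proof is correct and follows the paper's argument essentially step for step. You condition on $(B,C)$, use the same case split by membership of $i$ in $B$ and $C$, and argue convexity of $F(\delta)=\E_T|(1-T)\delta+Ts|^p$, with differentiability at $0$ (including $p=1$) coming from the nonzero support of $T$. You then close with the balance identity $\E[(1-T)\psi_p(T)]=0$ and handle $i\in C\setminus B$ through the symmetry $T\overset{d}{=}1-T$.
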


\begin{proof}
Condition on the sampled sets \(B,C\), fix agent \(i\), and fix all other
reports.  If \(i\notin B\cup C\), their report affects neither fit.  If
\(i\in B\cap C\), both fits interpolate their report, and hence every lottery
outcome does too; truthful reporting gives zero loss.

Suppose \(i\in B\setminus C\).  Let \(t\) be their true label and let \(v\) be
the prediction at \(i\) of the fit based on \(C\), which is independent of their
report. If they report \(t+\delta\), their conditional expected loss is
\[
  F(\delta)=\E_T
  \left|T(v-t)+(1-T)\delta\right|^p.
\]
This is a convex function of \(\delta\). If \(v=t\), then \(F(0)=0\).
Otherwise, \(T(v-t)\ne0\) for every outcome, so \(F\) is differentiable at
zero even when \(p=1\), and
\[
  F'(0)=p\,\psi_p(v-t)\,
  \E[(1-T)\psi_p(T)]=0
\]
by \eqref{powers:eq-convex-balance}.  Convexity makes \(\delta=0\) a global
minimizer.  The case \(i\in C\setminus B\) follows from the equality in law of
\(T\) and \(1-T\).  The incentive inequality therefore holds for every fixed
pair \((B,C)\), and averaging over their report-independent distribution
proves SPIE.
\end{proof}

\begin{lemma}[Squared error of the convex-power lottery]
\label[lemma]{powers:convex-efficiency}
For \(p\ge1\), the expected squared error is at most
\([1+d(1/2+\lambda_p)]\OPT(y)\).
\end{lemma}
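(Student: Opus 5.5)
We decompose the residual orthogonally, conditioning on the independent draws $B$, $C$ and the independent lottery variable $T$. Write $z^\star=Hy$ and $r=y-z^\star\perp L$. Every realization of $Z=(1-T)U+TV$ lies in $L$, since $U,V\in L$. Hence $Z-z^\star\in L$ and
\[
\E\norm{Z-y}_2^2=\norm r_2^2+\E\norm{Z-z^\star}_2^2 .
\]
It remains to show that the second term equals $(\E[(1-T)^2+T^2])\,\sigma^2$ and that this multiplier is $1/2+\lambda_p$.

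For the first claim, write $Z-z^\star=(1-T)(U-z^\star)+T(V-z^\star)$ and expand the square. Condition on $T$, which is independent of $(B,C)$. The cross term is $2(1-T)T\,\E\langle U-z^\star,V-z^\star\rangle$. Because $B$ and $C$ are independent, this inner product factorizes as $\langle \E U-z^\star,\E V-z^\star\rangle$, and it vanishes by the unbiasedness in \Cref{powers:volume-moments}. The diagonal terms give $\E[(1-T)^2]\sigma^2+\E[T^2]\sigma^2$, using that $U$ and $V$ share the same second moment $\sigma^2\le d\,\OPT(y)$ from the same lemma.

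For the second claim, we compute the multiplier directly.
\begin{itemize}
\item With probability $1-\lambda_p$, $T=1/2$, which contributes $(1-\lambda_p)\cdot\tfrac12$.
\item With probability $\lambda_p/2$ each, $T=-a$ or $T=b=1+a$. In both cases the pair $\{T,1-T\}$ is $\{-a,\,1+a\}$, so each outcome contributes $a^2+(1+a)^2$.
\end{itemize}
Since $a=(\sqrt2-1)/2$, we have $a^2+(1+a)^2=2a^2+2a+1=\tfrac{3-2\sqrt2}{2}+(\sqrt2-1)+1=\tfrac32$. Equivalently, the displacement $(U-V)/\sqrt2$ has squared coefficient $1/2+1/2=1$ on top of the midpoint's $1/2$. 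The multiplier is therefore
\[
(1-\lambda_p)\cdot\tfrac12+\lambda_p\cdot\tfrac32=\tfrac12+\lambda_p .
\]
Combining the pieces gives $\E\norm{Z-y}_2^2\le \OPT(y)+(1/2+\lambda_p)\,d\,\OPT(y)$, which is the claimed bound.

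The only delicate step is the vanishing of the cross term. It needs independence of $B$ and $C$, which gives exact factorization of the inner product's expectation, together with $\E U=z^\star$. Independence of $T$ from the samples is also used when conditioning on $T$. Singular sets have probability zero, so the fits are almost surely well defined and all the expectations are finite.
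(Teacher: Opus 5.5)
Your proof is correct and matches the paper's argument essentially step for step: the orthogonal split $\E\norm{Z-y}_2^2=\OPT(y)+\E\norm{(1-T)(U-z^\star)+T(V-z^\star)}_2^2$, the cross term eliminated by independence of $B,C$ together with unbiasedness, and the multiplier $(1-\lambda_p)\tfrac12+\lambda_p(a^2+b^2)=\tfrac12+\lambda_p$. Your explicit verification that $a^2+(1+a)^2=\tfrac32$, and your remark that all expectations are finite, spell out details the paper leaves implicit.
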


\begin{proof}
Let \(z^\star=Hy\), \(\xi=U-z^\star\), and \(\eta=V-z^\star\).  By
\Cref{powers:volume-moments}, \(\xi\) and \(\eta\) are independent, mean-zero,
and have the same second moment \(\sigma^2\).  They lie in \(L\), while
\(y-z^\star\perp L\).  Since \(T\) is independent of both fits,
\begin{align}
  \E\norm{Z-y}_2^2
  &=\OPT(y)+\E\norm{(1-T)\xi+T\eta}_2^2 \notag\\
  &=\OPT(y)+\E\bigl[(1-T)^2+T^2\bigr]\sigma^2.
  \label{powers:eq-variance-decomposition}
\end{align}
The cross term vanishes because
\(\E\langle\xi,\eta\rangle=\langle\E\xi,\E\eta\rangle=0\).  At \(T=1/2\),
the multiplier is \(1/2\); at either \(T=-a\) or \(T=b\), it is
\(a^2+b^2=3/2\).  Consequently
\[
  \E\bigl[(1-T)^2+T^2\bigr]
  =(1-\lambda_p)\frac12+\lambda_p\frac32
  =\frac12+\lambda_p.
\]
The result follows from \(\sigma^2\le d\OPT(y)\).
\end{proof}

\subsection{Powers below one: \texorpdfstring{$0<p<1$}{0 < p < 1}}

Set \(w_p=(2+2^p)^{-1}\). Independently of the sampled sets and reports,
return \(U\) and \(V\) with probability \(w_p\) each, and their midpoint
with probability \(2^p w_p\).

\begin{lemma}[Sublinear-power incentives and efficiency]
\label[lemma]{powers:concave-case}
For \(0<p<1\), this lottery is SPIE for individual loss
\(|z_i-y_i|^p\), and its expected squared error is at most
\[
  \left[1+d\frac{2+2^{p-1}}{2+2^p}\right]\OPT(y).
\]
\end{lemma}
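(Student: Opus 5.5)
The plan mirrors the convex case: first condition on the sampled sets $B,C$ to check incentives outcome by outcome, then reuse the variance decomposition \eqref{powers:eq-variance-decomposition} with a new lottery weight $T$ to get efficiency.

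\textbf{Incentives.} I condition on the independently drawn sets $B,C$, fix agent $i$ and the other reports, and split into the same cases as in \Cref{powers:convex-incentives}.
\begin{itemize}
\item If $i\notin B\cup C$, the report has no influence on any outcome.
\item If $i\in B\cap C$, every outcome ($U$, $V$, and their midpoint) interpolates the report, so truthful reporting gives zero loss.
\end{itemize}
By the symmetry of the lottery in $U$ and $V$, it then suffices to treat $i\in B\setminus C$. Let $t$ be the true label, $s=v-t$ the error of the $C$-fit at $i$ (independent of $i$'s report), and $\delta$ the deviation. The three outcomes give errors $\delta$, $s$, and $(s+\delta)/2$. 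The conditional expected loss is therefore
\[
w_p\bigl(|\delta|^p+|s|^p+2^p|(s+\delta)/2|^p\bigr)=w_p\bigl(|\delta|^p+|s|^p+|s+\delta|^p\bigr).
\]
At $\delta=0$ this equals $2w_p|s|^p$. Truthfulness thus reduces to $|s|^p\le|\delta|^p+|s+\delta|^p$. This holds because $x\mapsto|x|^p$ is subadditive for $0<p\le1$ (concave with value $0$ at $0$), applied to $s=(s+\delta)+(-\delta)$. Averaging over the report-independent law of $(B,C)$ gives SPIE.

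\textbf{Efficiency.} I write $Z=(1-T)U+TV$ with $T=0$ or $T=1$ each with probability $w_p$, and $T=1/2$ with probability $2^pw_p$, with $T$ independent of the fits. The argument of \Cref{powers:convex-efficiency} goes through verbatim: $\xi=U-Hy$ and $\eta=V-Hy$ are independent and mean zero, lie in $L$, and are orthogonal to $y-Hy$. This gives
\[
\E\norm{Z-y}_2^2=\OPT(y)+\E[(1-T)^2+T^2]\,\sigma^2.
\]
The multiplier is $1$ at $T\in\{0,1\}$ and $1/2$ at $T=1/2$. Its expectation is
\[
2w_p+2^{p-1}w_p=\frac{2+2^{p-1}}{2+2^p}.
\]
Combining this with $\sigma^2\le d\,\OPT(y)$ from \Cref{powers:volume-moments} gives the stated bound.

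The only nonroutine point is the incentive step. Without convexity we cannot use a first-order condition, so we need the exact pointwise inequality above. The weight $2^pw_p$ on the midpoint is chosen precisely so that the factor $2^{-p}$ from halving cancels and subadditivity applies directly. Note also that $q_p<1$ because $2^{p-1}<2^p$.
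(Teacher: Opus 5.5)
Your proposal is correct and follows the paper's proof essentially step for step: you condition on $(B,C)$, reduce the $B\setminus C$ case to $|s|^p\le|\delta|^p+|s+\delta|^p$ once the factor $2^p$ cancels the halving, and reuse the variance decomposition with $\E[(1-T)^2+T^2]=2w_p+2^{p-1}w_p$. The one small imprecision is your parenthetical justification: $|x|^p$ is not concave on $\R$, so the inequality should come, as in the paper, from the triangle inequality and monotonicity of $t\mapsto t^p$, followed by subadditivity of $t\mapsto t^p$ on $\R_+$.
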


\begin{proof}
Condition again on \((B,C)\).  The cases where agent \(i\) belongs to neither
set or to both sets are immediate.  If \(i\in B\setminus C\), use the notation from
the proof of \Cref{powers:convex-incentives} and set \(s=v-t\).  After a report
\(t+\delta\), their expected loss is
\[
  F(\delta)=w_p\bigl(
     |\delta|^p+|s|^p+|s+\delta|^p\bigr).
\]
For \(0<p<1\), the triangle inequality followed by subadditivity of
\(x\mapsto x^p\) on \(\R_+\) gives
\[
  |s|^p\le |s+\delta|^p+|\delta|^p.
\]
Hence \(F(\delta)\ge2w_p|s|^p=F(0)\).  The case
\(i\in C\setminus B\) is symmetric, so the mechanism is SPIE.

For the squared-error calculation, write the lottery as
\(Z=(1-T)U+TV\), with \(T=0,1,1/2\) having probabilities
\(w_p,w_p,2^p w_p\), respectively.  The variance decomposition
\eqref{powers:eq-variance-decomposition} remains valid, and
\[
  \E[(1-T)^2+T^2]
  =2w_p+2^{p-1}w_p
  =\frac{2+2^{p-1}}{2+2^p}.
\]
Apply \Cref{powers:volume-moments} to finish the proof.
\end{proof}

\begin{proof}[Proof of \Cref{thm:powers}]
For \(p\ge1\), combine \Cref{powers:convex-incentives} and
\Cref{powers:convex-efficiency}; for \(0<p<1\), use
\Cref{powers:concave-case}.  These give exactly the factor
\(1+dq_p\) in \eqref{eq:power-factor}.

Each outcome is a linear combination of \(U\) and \(V\), so it belongs to
\(L=\col(X)\) and is feasible for the complete hypothesis class
\(\Theta=\R^d\).

Finally, when \(0<p<1\), the numerator defining \(q_p\) is strictly smaller
than its denominator.  When \(p\ge1\), the arithmetic--geometric mean
inequality gives
\(\kappa^{p-2}+\kappa^{2-p}\ge2\), with equality exactly at \(p=2\).
Therefore \(\lambda_p\le1/2\), again with equality exactly at \(p=2\), so
\(q_2=1\) and \(q_p<1\) for every \(p\ne2\).
\end{proof}

\paragraph{Behavior near quadratic loss.}
For our lottery with \(p\ge1\), putting \(x=\kappa^{p-2}\) gives
\[
  \lambda_p=\frac{2x}{(x+1)^2},
  \qquad
  1+dq_p
  =d+1-\frac d2\left(\frac{x-1}{x+1}\right)^2.
\]
The improvement in this bound vanishes quadratically as \(p\to2\):
\begin{equation*}
  \frac d2\left(
    \frac{\kappa^{p-2}-1}{\kappa^{p-2}+1}
  \right)^2
  =\frac{d\log^2\!\kappa}{8}(p-2)^2
   +O\!\left(d(p-2)^4\right).
  \label{powers:eq-near-two}
\end{equation*}

\end{document}